\documentclass[lettersize,journal,10pt]{IEEEtran}
\usepackage{amsmath,amsfonts}
\usepackage{algorithmic}
\usepackage{algorithm}
\usepackage{array}
\usepackage[caption=false,font=normalsize,labelfont=sf,textfont=sf]{subfig}
\usepackage{textcomp}
\usepackage{stfloats}
\usepackage{url}
\usepackage{verbatim}
\usepackage{graphicx}
\usepackage{cite}
\hyphenation{op-tical net-works semi-conduc-tor IEEE-Xplore}
\usepackage[inline]{enumitem}

\usepackage{amsmath,amsthm}
\usepackage{amssymb}
\usepackage{xcolor}
\usepackage{graphicx}
\usepackage[colorlinks=true, allcolors=blue]{hyperref}
\usepackage{pgfplots}
\pgfplotsset{compat=newest}

\usepackage{tikz}
\usetikzlibrary{arrows.meta,backgrounds,calc,shadings,shapes.arrows,shapes.symbols,shadows,fit,positioning,topaths,hobby}
\usepackage[normalem]{ulem}
\useunder{\uline}{\ul}{}
\usepackage{booktabs}
\usepackage{balance}

\usepackage[capitalize]{cleveref}

\newtheorem{proposition}{Proposition}%
\newtheorem{theorem}{Theorem}%
\newtheorem{lemma}{Lemma}%

\newtheorem{definition}{Definition}%
\newtheorem{example}{Example}%
\newtheorem{approach}{Approach}%
\newtheorem{remark}{Remark}

\definecolor{bleudefrance}{rgb}{0.19, 0.55, 0.91}

\newcommand{\cD}{\mathcal{D}}

\newcommand{\rvX}{\mathsf{X}}
\newcommand{\rvY}{\mathsf{Y}}
\newcommand{\rvZ}{\mathsf{Z}}

\newcommand{\F}{\mathbb{F}}

\newcommand{\entropy}[2]{\ensuremath{\mathrm{H}_{#1}\left( #2 \right)}}

\newcommand{\mutinf}[2]{\ensuremath{\mathrm{I}_{#1}\left( #2 \right)}}

\definecolor{my_blue}{rgb}{0,0,0.8}
\definecolor{my_darkblue}{rgb}{0.2,0.2,0.6}
\definecolor{my_lightblue}{rgb}{0,0,0.8}
\definecolor{shadecolor}{gray}{.65} 
\definecolor{my_red}{rgb}{0.63922,0.14902,0.21961}
\definecolor{my_redgray}{rgb}{0.5,0.14902,0.21961}
\definecolor{my_lightred}{rgb}{0.63922,0.14902,0.21961}
\definecolor{my_green}{rgb}{0.2,0.6,0.25}
\definecolor{my_lightgreen}{rgb}{0.5,0.9,0.55}
\definecolor{blond}{rgb}{0.98, 0.94, 0.75}
\definecolor{caucasian}{rgb}{1,0.8,0.6}
\definecolor{richelectricblue}{rgb}{0.03, 0.57, 0.82}
\definecolor{bleudefrance}{rgb}{0.19, 0.55, 0.91}
\definecolor{ogreen}{rgb}{0,0.5,0}
\definecolor{xgray}{rgb}{0.9, 0.9, 0.9}

\definecolor{TUMBlue}{RGB}{0,101,189} %
\definecolor{TUMBlueDark}{RGB}{0,82,147} %
\definecolor{TUMBlueLight}{RGB}{152,198,234} %
\definecolor{TUMBlueMedium}{RGB}{100,160,200} %

\definecolor{TUMIvory}{RGB}{218,215,203} %
\definecolor{TUMGreen}{RGB}{162,173,0} %
\definecolor{TUMGray}{gray}{0.6} %

\definecolor{TUMGreenDark}{RGB}{0,124,48} %
\definecolor{TUMRed}{RGB}{196,7,27} %

\newcommand{\nclients}{\ensuremath{n}}
\newcommand{\client}{\ensuremath{i}}
\newcommand{\clientidxb}{\ensuremath{i_1}}
\newcommand{\clientidxc}{\ensuremath{i_2}}
\newcommand{\objective}{\ensuremath{t}}
\newcommand{\queriedobjective}{\ensuremath{j}}
\newcommand{\nobjectives}{\ensuremath{T}}
\newcommand{\colluder}{\ensuremath{i}}
\newcommand{\ncolluders}{\ensuremath{z}}
\newcommand{\ncolluderss}{\ensuremath{z_s}}
\newcommand{\ncolludersq}{\ensuremath{z_q}}
\newcommand{\nlabels}{\ensuremath{s}}
\newcommand{\labelidx}{\ensuremath{\ell}}
\newcommand{\logit}[1][\labelidx]{\ensuremath{y_{\client, #1}^{(\objective)}}}
\newcommand{\genlogit}[1][\labelidx]{\ensuremath{y_{ #1}^{(\objective)}}}
\newcommand{\logitqueried}[1][\labelidx]{\ensuremath{y_{\client, #1}^{(\queriedobjective)}}}
\newcommand{\partidx}{\ensuremath{u}}

\newcommand{\logitpart}[1][\partidx]{\ensuremath{{y}_{\client, \partition,#1}^{(\objective)}}}
\newcommand{\logitpartentry}[1][\partidx]{\ensuremath{{y}_{\client, \partition,#1}^{(\objective)}}}
\newcommand{\logitpartrv}[1][\partidx]{\ensuremath{\rvY_{\client, \partition, \partidx}^{(\objective)}}}
\newcommand{\logitpartentryqueried}[1][\partidx]{\ensuremath{{y}_{\client, \partition,#1}^{(\queriedobjective)}}}
\newcommand{\logitpartqueried}[1][\partidx]{\ensuremath{y_{\client, \partition, #1}^{(\queriedobjective)}}}
\newcommand{\logitpartqueriedl}[1][\labelidx]{\ensuremath{y_{\client, #1}^{(\queriedobjective)}}}
\newcommand{\logitpartqueriedrv}[1][\partidx]{\ensuremath{\mathrm{Y}_{\client, \partition, #1}^{(\queriedobjective)}}}
\newcommand{\rlogits}[1][\colluder]{\ensuremath{r_{\client, \partition, #1}^{(\objective)}}}
\newcommand{\rlogitsl}{\ensuremath{r_{\client, \labelidx}^{(\objective)}}}

\newcommand{\rlogitsqueriedl}[1][\colluder]{\ensuremath{r_{\client, \labelidx}^{(\queriedobjective)}}}

\newcommand{\nclasses}{\ensuremath{c}}
\newcommand{\npartitions}{\ensuremath{P}}
\newcommand{\partition}{\ensuremath{p}}
\newcommand{\ssclientl}[1][x]{\ensuremath{f_{\client, \labelidx}^{(\objective)}(#1)}}
\newcommand{\ssclientbl}[1][x]{\ensuremath{f_{\clientidxb, \labelidx}^{(\objective)}(#1)}}
\newcommand{\ssclientcl}[1][x]{\ensuremath{f_{\clientidxc, \labelidx}^{(\objective)}(#1)}}
\newcommand{\ssclient}[1][x]{\ensuremath{f_{\client, \partition}^{(\objective)}(#1)}}

\newcommand{\sumssclient}[1][x]{\ensuremath{F_{\partition}^{(\objective)}(#1)}}
\newcommand{\sumssclientl}[1][x]{\ensuremath{F_{\labelidx}^{(\objective)}(#1)}}

\newcommand{\storagecode}{\ensuremath{\mathcal{C}}}
\newcommand{\querycode}{\ensuremath{\mathcal{D}}}
\newcommand{\kstorage}{\ensuremath{k_\storagecode}}

\newcommand{\kquerydual}{\ensuremath{k_{\querycode^\perp}}}
\newcommand{\grs}[1]{\ensuremath{\text{GRS}_{#1}}}
\newcommand{\productcode}{\ensuremath{\storagecode \star \querycode}}

\newcommand{\ssquery}[1][x]{\ensuremath{q_{\partition}^{(\objective, \queriedobjective)}(#1)}}
\newcommand{\ssqueryl}[1][x]{\ensuremath{q_{\labelidx}^{(\objective, \queriedobjective)}(#1)}}
\newcommand{\querymessage}[1][x]{\ensuremath{\delta_{\partition}^{(\objective, \queriedobjective)}}}
\newcommand{\querymessagel}[1][x]{\ensuremath{\delta_{\labelidx}^{(\objective, \queriedobjective)}}}
\newcommand{\rquery}[1][\colluder]{\ensuremath{k_{\partition, #1}^{(\objective)}}}
\newcommand{\rqueryl}{\ensuremath{k_{\labelidx}^{(\objective)}}}
\newcommand{\ranswer}[1][\colluder]{\ensuremath{a_{\partition, #1, \client}^{(\objective)}}}

\newcommand{\partitionset}{\ensuremath{\mathcal{P}}}

\newcommand{\function}[1][\objective]{\ensuremath{h_{#1}}}

\newcommand{\data}[1][\client]{\ensuremath{\mathcal{D}_{#1}}}
\newcommand{\model}[2]{\ensuremath{w_{#1, #2}}}
\newcommand{\publicdata}{\ensuremath{\mathcal{D}_\text{pub}}}
\newcommand{\publicsample}{\ensuremath{\mathbf{x}_\labelidx}}

\newcommand{\sumo}{\ensuremath{\sum_{\objective=1}^\nobjectives}}
\newcommand{\sumc}{\ensuremath{\sum_{\client=1}^\nclients}}

\newcommand{\answer}[1][\client]{\ensuremath{\mathcal{A}^{#1}}}
\newcommand{\answerl}[1][\client]{\ensuremath{\mathcal{A}^{#1}_\labelidx}}
\newcommand{\ssanswerpart}[1][\partition]{\ensuremath{A_{#1}^\client(x)}}
\newcommand{\answerpart}[1][\partition]{\ensuremath{A_{#1}^\client(\alpha_\client)}}

\newcommand{\graph}{\ensuremath{\mathcal{G}}}
\newcommand{\vertices}{\ensuremath{\mathcal{V}}}
\newcommand{\edges}{\ensuremath{\mathcal{E}}}
\newcommand{\incident}{\ensuremath{\mathbf{I}}}
\newcommand{\edge}[1][\objective]{\ensuremath{e_{#1}}}
\newcommand{\incidentedge}[1][\objective]{\ensuremath{\mathcal{I}(\edge[#1])}}
\newcommand{\rep}{\ensuremath{\rho}}
\newcommand{\incidentvertex}[1][\client]{\mathcal{I}(#1)}
\newcommand{\rlogitssum}[1][\colluder]{\ensuremath{r_{\partition, #1}^{\prime (\objective)}}}
\newcommand{\rlogitssumqueried}[1][\colluder]{\ensuremath{r_{\partition, #1}^{\prime (\queriedobjective)}}}
\newcommand{\define}[1][\colluder]{\ensuremath{\triangleq}}

\newcommand{\colss}{\ensuremath{\mathcal{T}_s}}
\newcommand{\colquery}{\ensuremath{\mathcal{T}_q}}

\newcommand{\objectivetmp}{\ensuremath{\objective^\prime}}
\renewcommand{\entropy}[1]{\ensuremath{\mathrm{H}\left(#1\right)}}
\newcommand{\condentropy}[2]{\ensuremath{\mathrm{H}\left(#1 \mid #2\right)}}
\renewcommand{\mutinf}[2]{\ensuremath{\mathrm{I}\left(#1; #2\right)}}
\newcommand{\condmutinf}[3]{\ensuremath{\mathrm{I}\left(#1; #2 \mid #3\right)}}
\newcommand{\queryrv}[1][\client]{\mathrm{Q}_{\partition, #1}^{(\objective)}}
\newcommand{\queriesrv}[1][\mathcal{T}]{\mathrm{Q}_{#1}^{(\objective)}}
\newcommand{\queryrvcol}[1][\colquery]{\mathrm{Q}_{\partition, #1}^{(\objective)}}
\newcommand{\queryrvcoltmp}[1][\client]{\mathrm{Q}_{\partition, \colquery}^{(\objective^\prime)}}
\newcommand{\logitrv}[1][\labelidx]{\ensuremath{\rvY_{\client, #1}^{(\objective)}}}
\newcommand{\alllogitsrv}[1][\labelidx]{\ensuremath{\rvY_{\client, [\nlabels]}^{(\objective)}}}
\newcommand{\logitqueriedrv}[1][\labelidx]{\ensuremath{\rvY_{\client, #1}^{(\queriedobjective)}}}

\newcommand{\allreceived}[1][\clientidxb]{\ensuremath{\mathcal{M}_{#1}}}
\newcommand{\allreceivedcol}[1][\ncolluderss]{\ensuremath{\mathcal{M}_{#1}}}

\newcommand{\allclient}{\ensuremath{\mathcal{S}_\client}}
\newcommand{\allcol}[1][\colquery]{\ensuremath{\mathcal{S}_{#1}}}

\newcommand{\ssrandpart}[1][\partition]{\ensuremath{R_{#1}(x)}}
\newcommand{\rrand}[1][\colluder]{\ensuremath{s_{\partition, #1}}}
\newcommand{\randpart}[1][\partition]{\ensuremath{R_{#1}(\alpha_\client)}}

\newcommand{\ssanswerpartsym}[1][\partition]{\ensuremath{A_{#1}^\prime(x)}}
\newcommand{\answerpartsym}[1][\partition]{\ensuremath{A_{#1}^\prime(\alpha_\client)}}

\newcommand{\sumi}{\ensuremath{\sum_{\objective \in \incidentvertex}}}
\newcommand{\dualco}{\ensuremath{\nu_{\objective, \client}}}
\newcommand{\dualcoqueried}[1][\client]{\ensuremath{\nu_{\queriedobjective, #1}}}
\newcommand{\evalc}[1][\client]{\ensuremath{\alpha_{#1}}} 
\newcommand{\eval}{\ensuremath{\alpha}}

\newcommand{\evalidx}{\ensuremath{\vartheta}}
\newcommand{\sumio}{\ensuremath{\sum_{\client \in \incidentedge}}}
\newcommand{\sumioqueried}{\ensuremath{\sum_{\client \in \incidentedge[\queriedobjective]}}}

\newcommand{\logitpartqueriedsum}[1][\partidx]{\ensuremath{\bar{y}_{\partition, #1}}}
\newcommand{\evalsum}[1][\evalidx]{A^{(#1)}}

\newcommand{\camready}[1]{\textcolor{black}{#1}}
\newcommand{\rev}[1]{\textcolor{black}{#1}}

\begin{document}

\title{Federated One-Shot Learning with Data Privacy and Objective-Hiding}

\author{
\IEEEauthorblockN{Maximilian Egger\IEEEauthorrefmark{1},~\IEEEmembership{Student Member},~IEEE, Rüdiger Urbanke\IEEEauthorrefmark{2},~\IEEEmembership{Senior Member},~IEEE, and~Rawad~Bitar\IEEEauthorrefmark{1},~\IEEEmembership{Member},~IEEE\\}
\IEEEauthorblockA{\IEEEauthorrefmark{1}%
                   Technical University of Munich, Germany \{maximilian.egger, rawad.bitar\}@tum.de}
                   
\IEEEauthorblockA{\IEEEauthorrefmark{2}%
                   École Polytechnique Fédérale de Lausanne, Switzerland \{rudiger.urbanke\}@epfl.ch}
\thanks{This project is funded by DFG (German Research Foundation) projects under Grant Agreement Nos. BI 2492/1-1 and WA 3907/7-1.}
\thanks{Part of the work was done when RB and ME visited RU at EPFL supported in parts by EuroTech Visiting Researcher Programme grants.}%
\thanks{A preliminary version of this paper is accepted for presentation at the IEEE International Symposium on Information Theory 2025 \cite{egger2025federatedisit}.}}

\markboth{%
}%
{Shell \MakeLowercase{\textit{et al.}}: A Sample Article Using IEEEtran.cls for IEEE Journals}

\maketitle

\begin{abstract}
Privacy in federated learning is crucial, encompassing two key aspects: safeguarding the privacy of clients' data and maintaining the privacy of the federator's objective from the clients. While the first aspect has been extensively studied, the second has received much less attention.

We present a novel approach that addresses both concerns simultaneously, drawing inspiration from techniques in knowledge distillation and private information retrieval to provide strong information-theoretic privacy guarantees.

Traditional private function computation methods could be used here; however, they are typically limited to linear or polynomial functions. To overcome these constraints, our approach unfolds in three stages. In \textbf{stage 0}, clients perform the necessary computations locally. In \textbf{stage 1}, these results are shared among the clients, and in \textbf{stage 2}, the federator retrieves its desired objective without compromising the privacy of the clients’ data. The crux of the method is a carefully designed protocol that combines secret-sharing-based multi-party computation and a graph-based private information retrieval scheme. %
We show that our method outperforms existing tools from the literature when properly adapted to this setting.

\end{abstract}

\begin{IEEEkeywords}
Federated Learning, Objective-Hiding, Information-Theoretic Privacy, Private Information Retrieval, Secure Aggregation.
\end{IEEEkeywords}

\section{Introduction}

We consider federated learning (FL), a framework where a federator and a set of clients with private data collaborate to train a neural network. Due to privacy constraints, the clients' data cannot be directly shared with the federator or among the clients. This privacy concern has been extensively studied in the literature~\cite{bonawitz2017practical,kairouz2021advances,li2021survey,wen2023survey,ye2023heterogeneous}.  
There exists a second, often overlooked, privacy concern: ensuring the privacy of the federator's objective used to train the neural network. This aspect has not been explored in the literature to the same extent.\footnote{The notion of intention-hiding has only appeared recently in a different setting in vertical FL \cite{tang2023ihvfl}, where the intention of model training is implemented using suitable data preprocessing in the form of a private set intersection.} 

We present a novel approach that ensures the privacy of the clients' data and simultaneously hides the objective of the federator through a careful combination of a secure aggregation method and a tailored private information retrieval (PIR) scheme. 
The key challenge of the overall problem arises from the complexity of the computations required for training the neural network and the inherent heterogeneity of the clients' data. E.g., training a neural network in a distributed manner typically requires each client to compute a gradient of a loss function with respect to the current neural network using their private data. This is a highly non-linear computation.

We pose a very general research question: \textit{How can a federator use the clients' private data to accomplish a task, while hiding their objective and maintaining the privacy of the clients' data?} 

An instantiation of this problem is fine-tuning a large-language model for one target objective out of many target objectives known to the clients. We do not impose any assumptions on the clients' data; hence, the computed function might be different for each client. %
If the task were linear, the standard technique of {\em secure aggregation} \cite{bonawitz2017practical} could be employed effectively. However, for non-linear tasks, such as training a neural network, averaging the final models reached by the clients fails to produce a meaningful model. Even when the clients' data is similar, their resulting models may differ significantly, and the averaged model may lack meaningful utility. %

The challenge of combining the knowledge of multiple non-linear models is commonly referred to as (federated) knowledge distillation, particularly in the context of multiple teacher models and a single student model \cite{liu2020adaptive}. In this framework, pre-trained teacher models—representing the clients' models, or function outputs—transfer their knowledge to the student model, which corresponds to the federator's model. We draw inspiration from these concepts to enable arbitrary function computations within model training processes. Additionally, the recently introduced concept of auditing for private prediction \cite{chadha2024auditing} highlights the importance of exploring privacy-preserving techniques in this domain.

Our work can be viewed as a generalization of private function computation, a well-established framework for outsourcing complex computations while preserving the privacy of the function being evaluated (e.g., \cite{karpuk2018private,raviv2019private,sun2018capacity}). This computation can be performed on datasets stored either centrally or distributively \cite{tandon2018pir}. A stronger notion than function privacy is the additional protection of the underlying data used for computation, as discussed in \cite{raviv2019private}. This enhanced privacy is achieved using techniques such as secret sharing. While privacy guarantees can be categorized into information-theoretic and computational approaches, this work focuses exclusively on the former, \rev{ensuring privacy even against adversaries with unlimited computational power}.

The foundational concept underpinning private function computation is private information retrieval (PIR) \cite{mirmohseni2018private,obead2018achievable}. In PIR, a dataset is distributed across one or more servers, and a client retrieves a specific file or subset of data without revealing which file is of interest. Numerous studies have explored PIR from different angles, including single-server PIR, PIR with replicated data \cite{chor1998private}, PIR with MDS-coded data \cite{obead2018achievable}, PIR with secretly shared data \cite{freij2017private}, and PIR using graph-based replication, also referred to as non-replicated data storage \cite{raviv2019graph,jia2020asymptotic,banawan2019private}. 
This concept has been further extended to symmetric privacy, which ensures that the client learns nothing about the dataset beyond the specific file requested from the databases \cite{wang2019symmetric}. For a comprehensive overview of recent advancements and open challenges in PIR, we refer readers to the surveys \cite{ulukus2022private,d2024guided}.

Beyond private function computation \cite{raviv2019private,zhu2022symmetric,obead2022private-jsac,karpuk2018private,tahmasebi2020private,sun2018capacity,tahmasebi2019private}, concepts from PIR have been extended to private function retrieval \cite{mirmohseni2018private,obead2018achievable,obead2022private,zhang2020bounds,jia2023x,jia2024asymptotic,esmati2021multi,heidarzadeh2020private}, private inner product retrieval \cite{mousavi2019private}, and private linear transformation \cite{kazemi2021multi,heidarzadeh2022single}. While these frameworks address private computation for linear functions, polynomial functions, or compositions of linear functions, they do not provide solutions for computing arbitrary functions. This limitation introduces new challenges, which we address in this work. Our approach can be seen as a generalized framework for the computation of arbitrary predictors, extending beyond the previously studied settings.

We propose an end-to-end solution for federated one-shot learning that ensures both data and objective privacy. Under the assumption of a limited number of colluding clients, our approach prevents the leakage of private data to other clients or the federator. Specifically targeting federated learning applications, we introduce a novel method that integrates concepts from graph-based (symmetric) private information retrieval, secret sharing, multi-party computation, coded storage, knowledge distillation, and ensemble learning.

Our solution requires a public unlabeled dataset accessible to all clients and the federator, a common and non-restrictive assumption in semi-supervised machine learning problems \cite{learning2006semi}. In Stage 0, each client is assigned a subset of objectives chosen from a pool of candidate objectives. For each assigned objective, the client trains a local model and uses it to label the shared public dataset. In Stage 1, clients use a carefully tailored secret sharing scheme to share the labels privately among each other and aggregate the received shares. Since secret sharing schemes are linear, the aggregate of the secret shares consists of shares of the aggregated labels. In Stage 2, the federator uses a symmetric PIR scheme to receive only the aggregated labels corresponding to their objective of interest. Thus, enabling the reconstruction of the federator's model while leveraging data contributions from all clients and maintaining client privacy by observing only aggregated labels. Additionally, the federator's objective of interest remains hidden from the clients through the use of a PIR scheme. %

We focus on one-shot federated learning for several key reasons: (1) iterative schemes could compromise the privacy of the objective, (2) iterative methods incur significant communication overhead due to privacy mechanisms, and (3) collaborative iterative training using a shared public dataset introduces additional challenges. Further details are provided in \cref{remark:one_shot_fl}.

Our contributions are summarized as follows:

\begin{itemize} \item We formulate the general problem and propose a three-stage solution comprising the task assignment stage, the sharing stage and the query stage, with jointly designed codes to minimize the overall communication cost.
\item Building on \cite{jia2020asymptotic}, we leverage tools from the duals of Reed-Solomon codes to design a flexible task assignment scheme. Then, we develop a graph-based PIR scheme tailored for the designed private coded storage used in the query stage. This approach generalizes the storage pattern to ramp Secret Sharing (e.g., McEliece-Sarwate Secret Sharing \cite{mceliece1981sharing}), enhancing efficiency in generating and storing shares of the clients' labels. 
\item We extend the framework to incorporate data privacy against the federator, i.e., ensuring no additional information beyond the desired function is leaked to the federator.
\item We evaluate the rate of our scheme, demonstrating significant gains over existing PIR methods for graph-based coded data when computational resources are constrained. Additionally, we propose an optimized scheme utilizing star-product PIR for scenarios where computation is inexpensive. \end{itemize}

\begin{remark} In scenarios where client privacy is not a primary concern, a simple approach is to independently download all labels from each client and aggregate the results at the federator. For cryptographic guarantees, symmetric privacy—protecting the clients' results beyond the desired objective—can be achieved through oblivious transfer protocols for individual queries. Our proposed solution adheres to a stronger notion of information-theoretic privacy, safeguarding both individual client data, by only revealing the aggregate of the labels of the objective of interest, as well as the privacy of the objective itself. \end{remark}

\section{Related Work}
We review the following fundamental ingredients upon which our scheme is based on: graph-based PIR, secure aggregation in FL, private function computation, symmetric PIR and knowledge distillation.

\paragraph{PIR} While there has been an abundance of works, we specifically mention those closest to our interest: PIR for MDS coded data was studied, e.g., in \cite{freij2017private}. Joint message encoding for PIR was studied in \cite{sun2019breaking}, and~\cite{tian2020storage} studied the trade-off storage and download cost in PIR. Although the latter two are conceptually different, the ideas are loosely related. We focus in the following on graph-based PIR, whose methodology is most related to parts of our contribution.

\paragraph{Graph-Based PIR}
The problem of PIR on graph-based data storage was first introduced in \cite{raviv2019graph}, in which the replication of files is modeled by (hyper-)graphs, where vertices correspond to storage nodes and (hyper-)edges correspond to files and connect nodes storing the same file. The proposed scheme is proved to be uncritical in regard to collusions as long as the graph exhibits non-cyclic structures. However, privacy of the stored data is not considered. A similar concept has arisen concurrently and termed PIR for non-replicated databases \cite{banawan2019private}, later extended to optimal message sizes \cite{keramaati2020private}. Many follow-up works have considered PIR on different graph structures, e.g., \cite{sadeh2021bounds} where bounds on the capacity of PIR were derived for specific graphs, in particular for the star-graph (with one universal node storing all the messages), and the complete graph. A linear programming-based bound was given for general graphs. Follow-up works studies the capacity for a $K=4$ star graph \cite{yao2023capacity}. Privacy of the data through secret sharing was studied in \cite{jia2019cross} by means of Shamir Secret Sharing for $X$-secure and $T$-private PIR in a non-graph setting. Cross-space interference alignment is used to improve the rate of the PIR scheme by efficiently returning multiple information symbols of interest per query. This has later been extended to graph-based PIR for messages encoded by a Shamir Secret Sharing \cite{jia2020asymptotic}, which can be seen as the extension of \cite{raviv2019private} to $X$-security. The dual code of a Generalized Reed Solomon is used together with cross-subspace alignment for inference cancellation. Private function computation in the graph-based setting has recently been studied in \cite{jia2024asymptotic} for $X$-security and $T$-privacy. A generalization of cross-subspace alignment using algebraic geometry codes \cite{makkonen2024algebraic} with secret sharing \cite{makkonen2024secret} was recently proposed. Those works are concerned with Shamir Secret Sharing, and hence not suitable for this problem. Graph-based secret sharing was independently studied in \cite{de2023bounds}.

Semantic PIR in which the length of the messages may be different was considered in \cite{vithana2021semantic}. In \cite{karpuk2018private}, the authors studied PIR for replicated data and general functions such that the query space is a vector-space. Theorem~2 in \cite{sun2018capacity} and a result in \cite{obead2019capacity} appear to consider a general set of functions from non-colluding replicated databases. 
In \cite{tandon2018pir}, the authors consider a setting where the user decides how to store the messages on the non-colluding databases. There is no privacy of the data from the servers.

\paragraph{Secure Aggregation in FL}
Secure aggregation for FL was first introduced in~\cite{bonawitz2017practical}. Follow-up works are concerned with the communication overhead of such methods, e.g., \cite{bell2020secure,so2021turbo,kadhe2020fastsecagg,so2022lightsecagg,jahani2022swiftagg+}. Alternative models are considered in, e.g., \cite{schlegel2023coded,sami2023secure,egger2024private}.

\paragraph{Private Function Computation and Distributed Computing}
Private function computation was extensively studied in the literature for linear and polynomial functions, cf. \cite{raviv2019private,zhu2022symmetric,obead2022private-jsac,karpuk2018private,tahmasebi2020private,sun2018capacity,tahmasebi2019private}. Tools from PIR were further applied to distributed computing. In \cite{ulukus2022private}, a review and survey on this topic is provided. For instance, polynomial computation from distributed data was studied in \cite{tan2023privacy}, and distributed matrix multiplication from MDS coded storage was studied in \cite{zhu2023information}. Tools from PIR have further been used in FL for private submodel learning termed private read update write \cite{vithana2023private}.

\paragraph{Symmetric PIR}
The capacity of symmetric PIR (SPIR) where the messages and the randomness are encoded with codes with different parameters are considered in \cite{wang2019mismatched}. Symmetric PIR from MDS coded data with potentially colluding databases was considered in \cite{wang2019symmetric}. SPIR with user-side common randomness was considered in \cite{wang2021symmetric,wang2022digital}. Random SPIR was introduced in \cite{wang2022digital}, where the user is interested in a random message rather than a specific one. Closer to our work, symmetric private polynomial computation was studied in \cite{zhu2022symmetric}, where the authors also consider a finite set of candidate polynomial functions. Related to symmetric privacy with multiple servers, the combinations of multiple oblivious transfer protocols was recently studied in \cite{farras2023one}.

\paragraph{Knowledge Distillation in FL}
Throughout this work, we make use of ensemble learning methods, which have been extensively studied, e.g., in \cite{kittler1998combining,kuncheva2014combining,polikar2006ensemble}, and for heterogeneous classifiers in \cite{vongkulbhisal2019unifying}. The concept is also related to the student-teacher model in the setting of multiple teachers \cite{liu2020adaptive}. We refer interested readers to the survey in \cite{li2024federated} for an extensive review of knowledge distillation in FL.

\section{Preliminaries and System Model}

\textbf{Notation.} We denote finite fields of cardinality $q$ by $\mathbb{F}_q$. For a natural number $a$ we define the following set notation $[a] \define \{1, \cdots a\}$. For a random variable $\rvX$, we refer to the entropy as $\entropy{\rvX}$, and for two random variables $\rvX$ and $\rvY$, we denote the mutual information as $\mutinf{\rvX}{\rvY}$. Similarly, we denote the conditional entropy and the conditional mutual information conditioned on a random variable $\rvZ$ as $\condentropy{\rvX}{\rvZ}$ and $\condmutinf{\rvX}{\rvY}{\rvZ}$, respectively.

\textbf{Private Function Computation.} 
Each client $\client \in [\nclients]$ holds a private dataset $\mathcal{D}_\client$. In some cases, these datasets consist of distinct samples drawn from the same underlying distribution, referred to as the homogeneous case. In other cases, the datasets are drawn from different distributions, referred to as the heterogeneous case.

Let $\mathcal{F} = \{\function\}_{t=1}^{T}$ denote the pool of $\nobjectives$ candidate functions (or objectives)\rev{, public, and hence, known to all parties}. Among these, the federator is interested in a specific function $\function[\queriedobjective]$, where the index $j$ is unknown to the clients. Specifically, the federator aims to compute $\function[\queriedobjective]\left(\bigcup_{\client \in [\nclients]} \mathcal{D}_\client\right),$
leveraging the combined data from all clients, $\bigcup_{\client \in [\nclients]} \mathcal{D}_\client$. 
Our solution is designed for scenarios where $\function[\queriedobjective]$ is additively separable, i.e., 
$
\function[\queriedobjective]\left(\bigcup_{\client \in [\nclients]} \mathcal{D}_\client\right) = \sum_{\client \in [\nclients]} \function[\queriedobjective](\mathcal{D}_\client).
$

This property is naturally satisfied by linear models, and as we will demonstrate in the next section, it can also be extended to certain non-linear models, including neural networks.

\textbf{Non-linear Training Processes.} 
We now expand on the previous paragraph and discuss how standard tools from knowledge distillation can ensure additive separability even in cases where highly non-linear training processes are involved. The core idea is that the clients train models to label a public unlabeled dataset for each objective. The function $\function$ is then the composition of the local model training and the local model applied to the public data.

More precisely, let $\model{\client}{\objective}$ be the local model trained at client $\client$ for objective $\objective$. Let $\publicdata$ denote the public dataset consisting of $\nlabels$ samples $\{\publicsample\}_{\labelidx \in [\nlabels]}$. Define 
$
\function[\objective](\data) = \model{\client}{\objective}(\publicdata),
$
where 
$
\model{\client}{\objective}(\publicdata) \define \{\model{\client}{\objective}(\publicsample)\}_{\labelidx \in [\nlabels]} = \{\logit\}_{\labelidx \in [\nlabels]}.
$
Here, $\logit, \labelidx \in [\nlabels]$, refers to the label of client $\client$ and objective $\objective$ for data sample $\labelidx$ in $\publicdata$. Note that $\logit$ is of dimension $\nclasses$, and each entry is quantized to at most $\lfloor q / \nclients + 1 \rfloor$ levels. In summary, in this case, $\function$ is a function that takes as input a training dataset, trains a model, and outputs the labels for a fixed public dataset, i.e.,
$\function: \data[] \mapsto \{\genlogit\}_{\labelidx \in [\nlabels]}.
$

\textbf{Task Assignment.} %
Although the best performance can be reached if all clients compute a function (or train a model) for each objective, training a model for all $\nobjectives$ objectives might be expensive. Hence, for each $\objective$, we assign the task of computing $\function$ to only a subset of the clients. We model this assignment of tasks by a hypergraph $\graph(\vertices, \edges)$ with clients $[\nclients]$ represented by vertices, and objectives represented by hyperedges $\edges$. The binary incident matrix $\incident \in [0,1]^{\nclients \times \nobjectives}$ has its $(\client,\objective)$ entry equal to $1$ if client $\client$ computes the model for objective $\objective$, and $0$ otherwise. We assume a symmetric setup, where the column weight of $\incident$ is constant and equal to $\rep$, i.e., exactly $\rep$ clients compute a model for each objective $\objective$. We denote by $\incidentedge$ the set of clients incident with hyperedge $\edge$. Hence, all clients $\client \in \incidentedge$ compute a model for the objective $\objective$. Further, let $\incidentvertex$ denote the set of all edges incident with client $\client$.

\textbf{Privacy Guarantees.} During the execution of the protocol, clients share messages amongst each other and with the federator. Let $\allreceived[\client]$ be the set of all messages received by client $\client$, and for a set $\mathcal{T}\subseteq [\nclients]$, let $\allreceivedcol[\mathcal{T}]$ be the set of all messages received by all clients in $\mathcal{T}$, i.e., $\allreceivedcol[\mathcal{T}] \define \{\allreceived[\client]\}_{\client \in \mathcal{T}}$. Let $\queriesrv[\mathcal{T}]$ be the set of all queries received from the federator by clients $\client \in \mathcal{T}$ for objective $\objective$. After the sharing stage is complete, let $\allclient$ be the data stored by client $\client$, and $\allcol[\mathcal{T}]$ be the data stored by all clients $\client \in \mathcal{T}$ and let $\logitrv$ be the random variable representing the prediction of the public sample $\publicsample$ using objective $\objective$. %
We consider information-theoretic privacy notions, defined formally in the sequel, assuming at most $\ncolluderss$ colluding clients that target compromising other clients' individual data privacy and at most $\ncolludersq$ clients trying to infer the federator's objective. %
\rev{The multifold privacy guarantees for the clients' data and the federator's objective are formally stated in \cref{def:privacy_clients,def:objective_hiding,def:privacy_federator}}.
\begin{definition}[Data Privacy from Clients] \label{def:privacy_clients}
    No client's data is leaked to any other set of at most $\ncolluderss$ colluding clients, i.e., for all $\client \in [\nclients]$ and any client collusion set $\colss \subset [\nclients]\setminus\{\client\}, \vert \colss \vert \leq \ncolluderss$,
    \begin{equation*}
        \condmutinf{\{\function(\data)\}_{\objective \in \incidentvertex, \camready{\client \in [\nclients]}}}{\allreceivedcol[\colss]}{\{\function(\data)\}_{\objective \in \incidentvertex, \client \in \colss}} = 0.
    \end{equation*}
    where $\{\function(\data)\}_{\objective \in \incidentvertex} = \{\logitrv\}_{\objective \in \incidentvertex, \labelidx \in [\nlabels]}$ for the special case of one-shot FL.
\end{definition}

Since the clients' training data might be correlated, the conditioning ensures that nothing further is leaked beyond what is known to the colluding clients from their own computations.

\begin{definition}[Objective-Hiding] \label{def:objective_hiding}
    The identity $\queriedobjective$ of the function of interest to the federator is private from any $\ncolludersq$ colluding clients, i.e., for each $\colquery \subset [\nclients], \vert \colquery \vert \leq \ncolludersq$, it holds that
    \begin{equation*}
        \mutinf{\{\queriesrv[\colss]\}_{\objective\in [\nobjectives], }, \allcol\camready{, \allreceivedcol[\colquery]}}{\queriedobjective} = 0.
    \end{equation*}
\end{definition}

Going beyond the above privacy measures, we further require that no information beyond the aggregate clients' function results is revealed to the federator. With $\answer$ being the answer received by the federator from client $\client$, and $\answer[{[\nclients]}]$ the set of answers from all clients $\client \in [\nclients]$, we have the following definition of data privacy against the federator.
\begin{definition}[Data Privacy from Federator] \label{def:privacy_federator}
    The federator's knowledge about the clients data is limited to the quantity $\sumc \function[\queriedobjective](\data)$ of interest, i.e.,
    \begin{align*}
        \condmutinf{\answer[{[\nclients]}], \{\queriesrv[{[\nclients]}]\}_{\objective\in [\nobjectives]}}{\{\function(\data)\}_{\objective \in \incidentvertex, \client \in [\nclients]}}{\!\!\! \sumioqueried \!\!\! \function[\queriedobjective](\data)} = 0,
    \end{align*}
    where for FL we have $\function(\data) = \{\logitrv\}_{\labelidx \in [\nlabels]}$, and $\sumioqueried \function[\queriedobjective](\data) = \{\sumioqueried \logitqueriedrv\}_{\labelidx \in [\nlabels]}$.
\end{definition}
In PIR, \cref{def:objective_hiding} corresponds to the user privacy, i.e., hiding the identity of the queried file, and \cref{def:privacy_federator} is referred to as the symmetric privacy guarantee. The communication cost is formally determined as the size of all transmitted messages, i.e., $\entropy{\allreceivedcol[{[\nclients]}], \answer[{[\nclients]}]}$. The task assignment stage incurs no communication overhead. Therefore, we will analyze ${R}_{\mathrm{share}}$ and ${R}_{\mathrm{PIR}}$, the rates of the sharing and query (PIR) stages of our scheme, which are formally defined next.
    \begin{align*} 
        {R}_{\mathrm{share}} & \!=\! \frac{\entropy{\sumioqueried \! \function[\queriedobjective](\data)}}{\camready{\sumc \entropy{\allreceived[{\client}]}}\!}, {R}_{\mathrm{PIR}} \!=\! \frac{\entropy{\sumioqueried \! \function[\queriedobjective](\data)}}{\camready{\sumc \entropy{\answer[{\client}]}}\!}
    \end{align*}

\section{Problem Illustration Through the Lens of Fine-tuning Large-Language Models}

To illustrate the problem and the principle idea of our solution, we take the example of fine-tuning large language models (LLMs), which recently gained significant attention through the progress and capabilities of generative models such as GPT-4, Llama 3 and Mistral 7B. Imagine a pre-trained and generic LLM suitable for a variety of tasks. The federator is interested in fine-tuning the model according to a specific objective $\queriedobjective$, for instance, sentiment analysis \cite{socher2013recursive}, article classification \cite{zhang2015character} or question classification \cite{li2002learning}. Note that both functions (models) and datasets can differ across objectives. While the clients have suitable labeled data at hand that can be used for supervised learning, their data should be kept private. Knowing the different objectives (or functions), each client $\client$ can individually fine-tune the LLM with respect to every possible objective $\objective \in [\nobjectives]$, including the objective $\queriedobjective$ of interest. Thereby, each client obtains a model $\model{\client}{\objective}$ for each objective of interest. Since the average of clients' models for the same objective trained on their individual data is not meaningful, we make use of a public and unlabelled dataset $\publicdata$, consisting of $\nlabels$ samples $\publicsample,\ \labelidx \in [\nlabels]$, to transfer the knowledge from clients to the federator. Each client $\client$ labels the public samples $\publicsample$ for each objective $\objective$, i.e., each trained LLM $\model{\client}{\objective}$, thereby obtaining $\nlabels$ labels $\logit$ for each $\objective \in [\nobjectives]$.

Since the individual labels contain sensitive information about the clients' data \cite{tang2023reducing}, their privacy is as important as that of the fine-tuned LLM models. For the federator, it is beneficial to receive the average of the predictions in order to reconstruct a suitable model~\cite{hinton2015distilling}. In fact, such averaged predictions for all samples of the public dataset might improve the performance due to the diversity of the clients' data, leading to better generalization results \cite{hinton2015distilling}. To make the federator obtain the sum of all predictions without revealing any information about individual labels to the federator or to other clients, we borrow ideas from multi-party computation~\cite{cramer2015secure}. Here, each client stores a message, and they collaboratively want to compute the sum of the messages without leaking individual ones to any subset of $\ncolluderss$ clients trying to compromise privacy. In FL, such concepts are well-known as the secure aggregation of local models (or gradients) at each iteration~\cite{bonawitz2017practical,jahani2022swiftagg+}. Such secure aggregation techniques lead to high computational overheads, especially when using basic versions of secret sharing such as Shamir Secret Sharing \cite{shamir1979how}.

\begin{example} \label{ex:description} To illustrate the challenges, we explain and contrast two approaches to solve this motivating example. Assume for simplicity unbounded compute power of the clients, i.e., each client computes the function result for all objectives, and let $\rep=\nclients = 5$ and $\ncolluderss = \ncolludersq = \ncolluders = 1$. The two approaches are: \begin{enumerate*}[label={\textit{(\roman*)}}]
    \item designing a scheme for our framework using known techniques from knowledge distillation, secure aggregation, and methods from the PIR literature~\cite{jia2020asymptotic}; and \item rethinking the co-design of secure aggregation and PIR in this setting, through a new coding technique we introduce to lower the communication costs.\end{enumerate*} 
\end{example}

We will see that the first approach, while solving the problem, incurs a high communication cost in the sharing phase. The second approach reduces this cost, yet requires the assumption of unbounded computation, i.e., $\rep=\nclients$. Since the method of \cref{ex:mceliece} does not directly generalize to the case where $\rep<\nclients$, we construct a scheme in \cref{sec:scheme} that build on the concepts from \cref{ex:mceliece} and design a method for arbitrary incident matrices $\incident$ with $\rep \leq \nobjectives$ that alleviate the interferences resulting from an arbitrary choice of $\incident$ by leveraging dual properties of Reed-Solomon codes and a careful choice of the evaluation points. \rev{\cref{fig:block_diagram} shows the three-stage concept on a high level, and \cref{fig:framework} summarizes the functionality of the sharing and the query stage.}

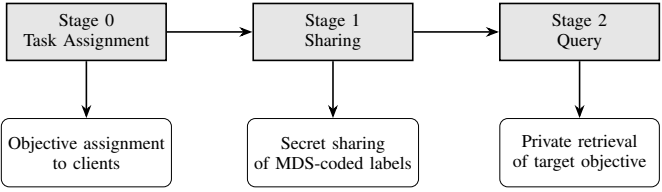
\begin{figure}
    \centering
    \resizebox{\linewidth}{!}{\begin{tikzpicture}[
    node distance=1cm and 1.5cm,
    every node/.style={font=\small},
    box/.style={rectangle, draw, rounded corners, align=center, minimum height=1.2cm, minimum width=2.8cm},
    arrow/.style={-{Stealth}, thick},
    stage/.style={rectangle, draw, thick, align=center, fill=gray!20, minimum width=2.8cm, minimum height=1cm}
]

\node[stage] (stage0) {Stage 0\\Task Assignment};
\node[stage, right=of stage0] (stage1) {Stage 1\\Sharing};
\node[stage, right=of stage1] (stage2) {Stage 2\\Query};

\node[box, below=1cm of stage0] (task) {Objective assignment\\to clients};
\node[box, below=1cm of stage1] (share) {Secret sharing\\ of MDS-coded labels};
\node[box, below=1cm of stage2] (query) {Private retrieval\\ of target objective};

\draw[arrow] (stage0) -- (stage1);
\draw[arrow] (stage1) -- (stage2);

\draw[arrow] (stage0) -- (task);
\draw[arrow] (stage1) -- (share);
\draw[arrow] (stage2) -- (query);

\end{tikzpicture}}
    \caption{\rev{High-level description of a three-stage protocol that first establishes in two stages (assignment and sharing) an MDS-coded data storage pattern based on secret sharing that encodes the aggregated clients computation results for all objectives, and then queries the result for the objective of interest.} \vspace{-.5cm}}
    \label{fig:block_diagram}
\end{figure}

\begin{approach}[Simplified Solution for Shamir Secret Sharing] \label{ex:shamir}
The task assignment stage is trivial as all clients compute the output of all objectives. In the sharing stage, each client $\client$ constructs for each sample $\labelidx$ and each objective $\objective$ a secret sharing 
\begin{equation*}
    \ssclientl = \logit[\labelidx] + x \rlogitsl,
\end{equation*}
encoding the private label $\logit \in \F_q^{\nclasses}$ (encoded by a one-hot encoding into a vector of length $\nclasses$, the number of classes) using $\ncolluders=1$ term $\rlogitsl$ of the size of the label, chosen independently and uniformly at random from $\F_q^{\nclasses}$. Each client $\clientidxb$ then sends the evaluation (share) $\ssclientbl[\alpha_{\clientidxc}]$ to client $\clientidxc$ and receives the share $\ssclientcl[\alpha_{\clientidxb}]$ for all $\clientidxc\in [\nclients]$ for each label $\labelidx$ and objective $\objective$. Aggregating the received shares, client $\clientidxb$ obtains a share $\sumssclientl[\alpha_{\clientidxb}]$ of the sum-secret sharing $\sumssclientl = \sum_{\client=1}^\nclients \ssclientl$, which can be viewed as a codeword of a Generalized Reed Solomon (GRS) code $\storagecode$ with dimension $\kstorage = 2$ and length $\nclients$.

In the query stage, to privately retrieve the labels of interest (i.e., the $\logit$'s for objective $\objective = \queriedobjective$), we design the queries by the following polynomial:
\begin{equation*}
    \ssqueryl = \querymessagel[\labelidx] + x \rqueryl,
\end{equation*}
where $\querymessagel = \begin{cases}
    \mathbf{1} & \text{ if } \objective = \queriedobjective \\
    0 & \text{ otherwise },
\end{cases}$ and $\rqueryl \in \F_q^\nclasses$ are chosen independently and uniformly at random. Each client $\client$ receives the query polynomial evaluated at $\alpha_\client$, and returns $\answerl = \sum_{\objective = 1}^\nobjectives \sumssclientl[\alpha_\client] \ssqueryl[\alpha_\client]$, which is an evaluation of the degree-$2$ polynomial 
\begin{align*}
    \sumssclientl \ssqueryl \!=\! \sumc \logitpartqueriedl &\!+ x\! \left(\!\sumc \rlogitsqueriedl \!+\! \sum_{\objective=1}^\nobjectives \rqueryl \sumc \logit[\labelidx]\!\right) \\
    &+ x^2 \left(\sum_{\objective=1}^\nobjectives \rqueryl \sumc \rlogitsl\right),
\end{align*}
where vector-products are element-wise. Interpolating this polynomial from any subset of the answers $\{\answerl\}_{\client=1}^\nclients$ of size $3$ reveals the desired aggregate of the labels $\sumc \logitpartqueriedl$. %
The communication cost of this scheme is $\nobjectives \nlabels \nclients (\nclients-1) + 3 \nlabels$ symbols in $\F_q^\nclasses$ since the queries and answers are sent for each of the $\nlabels$ labels. Note that we used the same collusion parameter $\ncolluders$ for storage and query codes, which need not hold true in general.
\end{approach}

\begin{figure}
    \centering
    \resizebox{\linewidth}{!}{\input{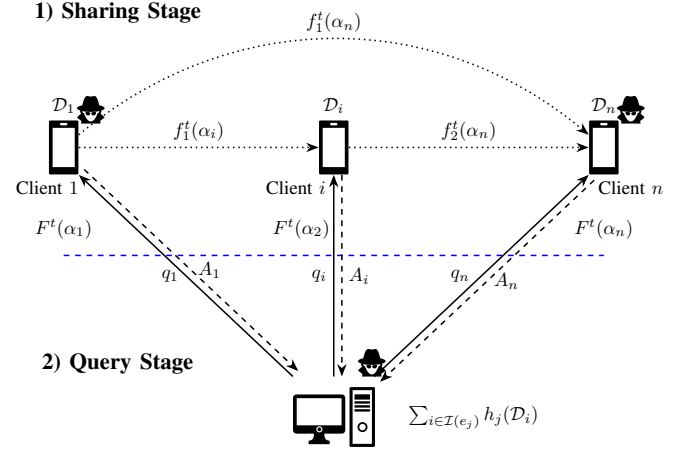}}
    \caption{\rev{Illustration of the sharing and query phase of our protocol.} The objective of interest is hidden from the curious clients. The clients do not learn about the other clients' results. The federator only learns the aggregate clients' results for the objective of interest.  \vspace{-.5cm}}
    \label{fig:framework}
\end{figure}

While the rate of the PIR scheme in \cref{ex:shamir} can be made optimal by using tools from cross-subspace alignment using multiple Shamir secret sharing schemes, the communication cost in the sharing stage is considerably large for the scheme when $\nclients-\ncolluders \gg 2$ and amplified by the number of objectives $\nobjectives$, which gets clear from the above example. 
Next, we illustrate how we further reduce communication costs by using ramp secret sharing and an adapted PIR scheme.

\begin{approach}[Simplified Solution for McEliece Sarwate Secret Sharing] \label{ex:mceliece}
Each client $\client$ constructs for each pair of two labels\footnote{Assuming for simplicity an even number of labels $\nlabels$. This idea can be generalized to jointly encoding an arbitrary number of labels.} $\logit, \logit[\labelidx+1] \in \F_q^{\nclasses}, \labelidx \in \{1, 3, 5, \cdots, \nlabels-1\}$ and for each objective $\objective$ a secret sharing 
\begin{equation*}
    \ssclientl = \logit[\labelidx] + x \logit[\labelidx+1] + x^2 \rlogitsl.
\end{equation*}
Each client $\clientidxb$ then sends the evaluation $\ssclientbl[\alpha_{\clientidxc}]$ to client $\clientidxc$ and receives a share $\ssclientcl[\alpha_{\clientidxb}]$ from each client $\clientidxc\in [\nclients]$ for each label and each objective $\objective$. Each client obtains a share $\sumssclientl[\alpha_{\clientidxb}]$ of the sum-secret sharing $\sumssclientl = \sum_{\client=1}^\nclients \ssclientl$. To privately retrieve the labels of interest, we design the following query polynomial:
\begin{equation*}
    \ssqueryl = \querymessagel[\labelidx] + x^2 \rqueryl,
\end{equation*}
where $\querymessagel = \begin{cases}
    \mathbf{1} & \text{ if } \objective = \queriedobjective \\
    0 & \text{ otherwise },
\end{cases}$ and $\rqueryl \in \F_q^\nclasses$ are chosen independently and uniformly at random. Each client $\client$ receives the query polynomial evaluated at $\alpha_\client$, and returns $\answerl = \sum_{\objective = 1}^\nobjectives \sumssclientl[\alpha_\client] \ssqueryl[\alpha_\client]$, which is an evaluation of the degree-$4$ polynomial 
\begin{align*}
    \sumssclientl \ssqueryl &= \sumc \logitpartqueriedl + x \sumc \logitpartqueriedl[\labelidx+1] \\
    &\hspace{-.3cm}+ x^2\left(\sumc \rlogitsqueriedl[1] + \sumo \rqueryl \sumc \logit \right) \\
    &\hspace{-.3cm}+ x^3 \sumo \rqueryl \sumc \logit[\labelidx+1] + x^4 \sumo \rqueryl \sumc \rlogitsl. 
\end{align*}
Interpolating this polynomial from the set of any $5$ answers in $\{\answerl\}_{\client=1}^\nclients$ reveals $\sumc \logitpartqueriedl$ and $\sumc \logitpartqueriedl[\labelidx+1]$. 
Hence, the communication cost is $\nobjectives \frac{\nlabels}{2} \nclients (\nclients-1) + \frac{5 \nlabels}{2}$ symbols in $\F_q^\nclasses$.
\end{approach}

\cref{ex:mceliece} does not suffer from the drawback of high communication costs in the sharing stage. However, it cannot be directly applied to the graph-based setting. A careful design of a suitable PIR scheme for arbitrary incident matrices $\incident$ will be needed.
Requiring each client to compute the label for all samples and all objective functions exhibits a good utility of the resulting model at the federator, but incurs large computation costs. Hence, a trade-off between the computation complexity and the utility arises in this setting. To leverage this trade-off, we seek solutions that allow reduced computation and simultaneously balance the communication costs in the sharing and the query phase, requiring the design of a specifically tailored graph-based PIR scheme to generalize the scheme in~\cref{ex:mceliece}.

\begin{remark} \label{remark:one_shot_fl}
    We resort to ideas from federated knowledge distillation due to the difficulty in hiding the objective in classical FL settings. Classical FL~\cite{mcmahan2017communication} is an iterative process where at each iteration, the clients compute a gradient based on their individual data and the current global model. They report the result to the federator, who aggregates the received gradients and updates the global model accordingly. After synchronizing the clients with the latest model, the process repeats. While gradient computation can be seen as an arbitrary non-linear function computation, hiding the objective of the federator in an iterative scheme is difficult. Even if the objective is completely hidden from the clients at the time of computing the gradients, the clients are potentially able to infer the federator's objective by simply observing consecutive global model updates. The idea is similar to that of model inversion attacks, which allows the federator to obtain information about the clients' private data after seeing their model updates (gradients). Further, incorporating our framework into an iterative process would incur extensive communication cost. 
\end{remark}

\section{General Scheme} \label{sec:scheme}

We present the end-to-end one-shot FL scheme that preserves the privacy of the federator's objective of interest and the clients' data\rev{, made of three stages: the task allocation, the sharing and the query stage.} We present the scheme for the case that the function results are the labels of a public dataset. Nonetheless, our solution applies for any function that is additively separable.
\paragraph{\rev{Task Assignment Stage}}
The task assignment stage requires each client $\client$ to compute a set of objectives denoted by $\incidentvertex$, i.e., the client computes a function $\function(\data)$, or trains a model $\model{\objective}{\client}$, for each objective $\objective \in \incidentvertex$. For each model, the client creates (predicts) a label out of $\nclasses$ possible classes for each sample of the public unlabeled dataset, obtaining $\nlabels$ labels $\logit, \labelidx \in [\nlabels]$ for each objective $\objective \in \incidentvertex$. The labels of the public dataset can contain either hard or soft information, i.e., a one-hot encoded vector of the class labels or a vector encoding the class probabilities with finite precision. The precision will affect the finite field size $q$ required to store all information necessary without creating finite field overflows in the aggregation process of the clients, i.e., with $\gamma$ quantization steps, we require $q \geq (\gamma-1) \nclients$. Having quantized the labels $\logit$ into vectors from $\F_q^\nclasses$ \rev{enables information-theoretic privacy guarantees through secret sharing.}

\paragraph{Sharing Stage}
We dedicate an evaluation point $\alpha_\client$ for each client $\client \in [\nclients]$, where $\evalc = \eval^\client, \client \in [\nclients]$ for a generator element $\eval$ of the field $\F_q$, $q \geq \max\{\rep + \kstorage-\ncolluderss, (\gamma-1)\nclients\}$, where $\kstorage = \frac{\rep-\ncolludersq +\ncolluderss+1}{2}$ is the dimension of the storage code. 
The storage code is constructed by Multi-Party Computation between the clients. For each objective $\objective \in [\nobjectives]$, each client $\client \in \incidentedge$ splits the set of all $\nlabels$ labels $\logit \in \F_{q}^{\nclasses}$ into $\npartitions=\frac{\nlabels}{\kstorage-\ncolluderss}$ partitions\footnote{Assuming for simplicity that $(\kstorage-\ncolluders) \vert \nlabels$.} $\partitionset_1,\dots,\partitionset_\npartitions$ each of size $\kstorage-\ncolluderss$, where the labels of partition $\partition$ are referred to by $\logitpartentry$ for $\partidx \in [\kstorage-\ncolluderss]$. For each objective $\objective$, each client $\client \in \incidentedge$ encodes each partition $\partition \in [\npartitions]$ into a secret sharing as
\begin{equation}
    \ssclient = \sum_{\partidx=1}^{\kstorage-\ncolluderss} x^{\partidx-1} \logitpartentry + \sum_{\tau=1}^{\ncolluderss} x^{\kstorage-\ncolluderss+\tau-1} \rlogits[\tau], \label{eq:encoding_poly}
\end{equation}
where $\forall \tau\in [\kstorage-\ncolluderss]$ and $\partition \in [\npartitions]$ the vectors $\rlogits[\tau] \in \F_{\nclients}^\nclasses$ are uniformly chosen from $\F_{q}^\nclasses$.
Each client $\client$ sends the evaluation $\ssclient[\alpha_{\clientidxb}]$ to client $\clientidxb$. Each client $\clientidxb$ aggregates all received contributions 
\begin{align*}
    \sumssclient[\alpha_{\clientidxb}] \! &\define \! \sum_{\client\in \incidentedge} \ssclient[\alpha_{\clientidxb}] \! \\
    &= \sum_{\partidx=1}^{\kstorage-\ncolluderss} \!\!\! x^{\partidx\!-\!1} \! \sum_{\client \in \incidentedge} \logitpartentry + \sum_{\tau=1}^{\ncolluderss} x^{\kstorage-\ncolluderss+\tau-1} \rlogitssum[\tau],
\end{align*}
where $\rlogitssum[\tau] \define \sumc \rlogits[\tau]$, 
thus obtains a codeword from an $(\nclients, \kstorage)$-GRS. %
This procedure incurs a total communication cost of $\frac{\nobjectives \nlabels \nclasses \cdot \rep (\rep-1)}{\kstorage-\ncolluderss}$ symbols in $\F_{q}$ given that\footnote{This is more general than the description above, where we would require that $(\kstorage - \ncolluderss) \vert \nlabels$ instead $(\kstorage - \ncolluderss) \vert \nlabels \nclasses$, but is achievable by splitting into fractions of the labels.} $(\kstorage - \ncolluderss) \vert \nlabels \nclasses$. The rate for each secret sharing is $\frac{\kstorage-\ncolluderss}{\rep (\rep-1)}$. Since the federator is interested in only one objective $\queriedobjective \in [\nobjectives]$, the overall sharing rate is $\frac{\kstorage-\ncolluderss}{\nobjectives \rep (\rep-1)}$.

\paragraph{Query Stage}
Having in place a $(\rep, \kstorage)$-GRS storage code (in fact, we have $\nobjectives \npartitions$ of such codes where each information symbol is of size $\frac{\nlabels \nclasses}{\npartitions (\kstorage-\ncolluderss)}$ symbols in $\F_{q}$, we design an (S)PIR scheme that allows the federator to retrieve the labels corresponding to the objective $\objective$ of interest without revealing its identity. 

\begin{figure}[!t]
    \centering
    \resizebox{.9\linewidth}{!}{
    \begin{tikzpicture}

\definecolor{darkgray176}{RGB}{176,176,176}
\definecolor{darkorange25512714}{RGB}{255,127,14}
\definecolor{forestgreen4416044}{RGB}{44,160,44}
\definecolor{lightgray204}{RGB}{204,204,204}
\definecolor{steelblue31119180}{RGB}{31,119,180}

\begin{axis}[
legend cell align={left},
legend style={
  fill opacity=0.8,
  draw opacity=1,
  text opacity=1,
  at={(0.03,0.97)},
  anchor=north west,
  draw=lightgray204
},
tick align=outside,
tick pos=left,
x grid style={darkgray176},
xlabel={\(\displaystyle \rho\)},
xmajorgrids,
xmin=2.65, xmax=10.35,
xtick style={color=black},
y grid style={darkgray176},
ylabel={Communication cost},
ymajorgrids,
ymin=28.4375, ymax=942.8125,
ytick style={color=black}
]
\addplot [only marks, semithick, forestgreen4416044, mark=*, mark size=3, mark options={solid, fill=forestgreen4416044}]
table {%
10 123.75
};
\addlegendentry{Tailored Star-Product \cite{freij2017private}}
\addplot [semithick, darkorange25512714, mark=x, mark size=3, mark options={solid}]
table {%
3 70
4 125
5 203.333333333333
6 302.5
7 422
8 561.666666666667
9 721.428571428572
10 901.25
};
\addlegendentry{GXSTPIR \cite{jia2020asymptotic}}
\addplot [semithick, steelblue31119180, mark=diamond*, mark size=3, mark options={solid, fill = steelblue31119180}]
table {%
3 70
4 86.6666666666667
5 105
6 124
7 143.333333333333
8 162.857142857143
9 182.5
10 202.222222222222
};
\addlegendentry{Ours}
\end{axis}

\end{tikzpicture}}
    \caption{Communication Cost in $\nlabels \nclasses$ symbols in $\F_q$ compared to our three-stage protocol paired with GXSTPIR \cite{jia2020asymptotic} and the Star-Product scheme with optimized storage code dimension $\kstorage^\star$ according to \cref{lemma:optimal_comm}. The latter is limited to $\rep = \nclients$, i.e., to non-graph-based settings. The parameters are chosen as $\nclients=10$, $\nobjectives=10$, and $\ncolluderss=\ncolludersq=1$. \vspace{-.5cm}}
    \label{fig:communication_cost}
\end{figure}

Consider for all $\objective \in [\nobjectives], \partition \in [\npartitions]$ the following query polynomial
\begin{equation}
    \ssquery = \querymessage[\partidx] + \sum_{\tau=1}^{\ncolludersq} x^{\kstorage-\ncolluderss+\tau-1} \rquery[\tau], \label{eq:query_poly}
\end{equation}
where $\querymessage = \begin{cases}
    \mathbf{1} & \text{ if } \objective = \queriedobjective \\
    0 & \text{ otherwise },
\end{cases}$ and $\rquery[\tau] \in \F_q^\nclasses$ are chosen independently and uniformly at random. For each $\objective \in \nobjectives$, client $\client \in \incidentedge$ receives an evaluation $\ssquery[\alpha_{\client}]$ of this polynomial, which it multiplies (element-wise) by all $\sumssclient[\alpha_{\client}]$ for $\partition \in [\npartitions]$, and additionally by 
\begin{equation*}
    \dualco \define \left(\prod_{\clientidxb \in \incidentedge \setminus \{\client\}} (\evalc - \evalc[\clientidxb])\right)^{-1}.
\end{equation*}
The choice of $\dualco$ is justified by the duals of GRS codes and will allow for arbitrary task assignment patterns. 
The federator receives the answers $\answer = \{\answerpart[1], \cdots, \answerpart[\npartitions]\}$, where $\answerpart$ are evaluations of the following answer polynomial
\begin{align*}
    &\ssanswerpart = \sumi \dualco \sumssclient[x] \ssquery[x] \\
    &= \dualcoqueried \!\! \sum_{\partidx=1}^{\kstorage-\ncolluderss} x^{\partidx-1} \!\!\!\! \sum_{\client\in \incidentedge[\queriedobjective]} \!\!\! \logitpartentryqueried + \sumi \dualco \!\!\!\!\! \sum_{\tau=1}^{\kstorage+\ncolludersq-1} \!\!\!\!\! x^{\kstorage-\ncolluderss+\tau-1} \ranswer[\tau],
\end{align*}
and $\forall \tau \in [\kstorage+\ncolludersq-1]$, $\ranswer[\tau]$ are interference terms being potentially different for each client. The multiplication $\sumssclient[x] \ssquery[x]$ is done element-wise.
\paragraph{Reconstruction Stage}
For $\evalidx \in [\kstorage-\ncolluderss]$, the federator sums over all clients' answers to obtain
\begin{align*}
    \evalsum &\define \sumc \evalc^{-\evalidx} \answerpart \\
    &= \sum_{\partidx=1}^{\kstorage-\ncolluderss} \left(\sum_{\client\in \incidentedge[\queriedobjective]} \logitpartentryqueried\right) \sumioqueried \dualcoqueried \evalc^{\partidx-\evalidx-1}.
\end{align*}
Let $\logitpartqueriedsum \define \sum_{\client\in \incidentedge[\queriedobjective]} \logitpartentryqueried$. By computing $\evalsum$ for all $\evalidx \in [\kstorage-\ncolluderss]$, the federator can obtain the desired information as
\begin{align*}
    \begin{pmatrix}
        \logitpartqueriedsum[1],
        \logitpartqueriedsum[2],
        \cdots,
        \logitpartqueriedsum[\kstorage-\ncolluderss]
    \end{pmatrix}^{T} \!\! = \mathbf{P}^{-1} \begin{pmatrix}
        \evalsum[1],
        \evalsum[2],
        \cdots,
        \evalsum[\kstorage-\ncolluderss]
    \end{pmatrix}^{T}\!\!\!\!,
\end{align*}
where $\mathbf{P}$ is the following invertible matrix
\begin{align*}
\mathbf{P} &= \sumioqueried \begin{pmatrix}
    \dualcoqueried \evalc^{-1} & 0 & \cdots & 0 & 0 \\
    \dualcoqueried \evalc^{-2} & \dualcoqueried \evalc^{-1} & 0 & \cdots & 0 \\
    \vdots & \ddots & \ddots & \ddots & 0 \\
    \dualcoqueried \evalc^{-\kstorage+\ncolluderss} & \cdots & \cdots & \cdots & \dualcoqueried \evalc^{-1}
\end{pmatrix}.
\end{align*}

\begin{table*}[t!]
\caption{Comparison of Secret Sharing and PIR Rates, and Communication Costs. The parameter $\kstorage^\star$ is given by \cref{lemma:optimal_comm}.}
\centering
\begin{tabular}{l|l|l|l}
\toprule
Method & Sharing Rate & PIR Rate & \rev{Total Communication Cost in $\mathbb{F}_q$} \\ \midrule
Ours & 
$\frac{\rep-\ncolluderss-\ncolludersq+1}{2\nobjectives \rep (\rep-1)}$ & 
$\frac{\rep-\ncolludersq-\ncolluderss+1}{2\nclients}$ & 
$\frac{2\nobjectives \nlabels \nclasses \cdot \rep (\rep-1)}{\rep-\ncolluderss-\ncolludersq+1} + \frac{2 \nlabels \nclasses \cdot \nclients}{\rep-\ncolludersq-\ncolluderss+1}$ \\\hline %

GXSPIR \cite{jia2020asymptotic} & 
$\frac{1}{\nobjectives \rep(\rep-1)}$ & 
$\frac{\rep-\ncolludersq-\ncolluderss}{\nclients}$ & 
$\nobjectives \nlabels \nclasses \cdot \rep (\rep-1) + \frac{\nlabels \nclasses \cdot \nclients}{\rep-\ncolluderss-\ncolludersq}$ \\\hline

$\rep=\nclients$ (modified \cite{freij2017private}) & 
$\frac{\kstorage^\star-\ncolluderss}{\nobjectives \nclients (\nclients-1)}$ & 
$\frac{(\kstorage^\star-\ncolluders)(\nclients-\kstorage^\star-\ncolludersq+1)}{\nclients \kstorage^\star}$ & 
$\frac{\nobjectives \nlabels \nclasses \cdot \nclients (\nclients-1)}{\kstorage^\star-\ncolluderss} + \frac{\nlabels \nclasses}{\kstorage^\star-\ncolluderss} \frac{\kstorage^\star \nclients}{\nclients - \kstorage^\star - \ncolludersq +1}$ \\ \bottomrule
\end{tabular}
\label{tab:rates}
\end{table*}

Having obtained all aggregated labels for the objective $\queriedobjective$ of interest, the federator retrains a suitable model leveraging the public dataset and the heterogeneity of the clients' data.

\begin{figure}[!t]
    \centering
    \resizebox{.9\linewidth}{!}{
    \begin{tikzpicture}

\definecolor{darkgray176}{RGB}{176,176,176}
\definecolor{darkorange25512714}{RGB}{255,127,14}
\definecolor{forestgreen4416044}{RGB}{44,160,44}
\definecolor{lightgray204}{RGB}{204,204,204}
\definecolor{steelblue31119180}{RGB}{31,119,180}

\begin{axis}[
legend cell align={left},
legend style={
  fill opacity=0.8,
  draw opacity=1,
  text opacity=1,
  at={(0.03,0.97)},
  anchor=north west,
  draw=lightgray204
},
log basis y={10},
tick align=outside,
tick pos=left,
x grid style={darkgray176},
xlabel={\(\displaystyle \rho\)},
xmajorgrids,
xmin=6.55, xmax=104.45,
xtick style={color=black},
y grid style={darkgray176},
ylabel={Communication cost},
ymajorgrids,
ymin=1078.38386960765, ymax=253788.603034318,
ymode=log,
ytick style={color=black}
]
\addplot [only marks, semithick, forestgreen4416044, mark=*, mark size=3, mark options={solid, fill=forestgreen4416044}]
table {%
100 2305.55555555556
};
\addlegendentry{Tailored Star-Product \cite{freij2017private}}
\addplot [semithick, darkorange25512714, mark=x, mark size=3, mark options={solid}]
table {%
11 2300
12 2690
13 3153.33333333333
14 3665
15 4220
16 4816.66666666667
17 5454.28571428571
18 6132.5
19 6851.11111111111
20 7610
21 8409.09090909091
22 9248.33333333333
23 10127.6923076923
24 11047.1428571429
25 12006.6666666667
26 13006.25
27 14045.8823529412
28 15125.5555555556
29 16245.2631578947
30 17405
31 18604.7619047619
32 19844.5454545455
33 21124.347826087
34 22444.1666666667
35 23804
36 25203.8461538462
37 26643.7037037037
38 28123.5714285714
39 29643.4482758621
40 31203.3333333333
41 32803.2258064516
42 34443.125
43 36123.0303030303
44 37842.9411764706
45 39602.8571428571
46 41402.7777777778
47 43242.7027027027
48 45122.6315789474
49 47042.5641025641
50 49002.5
51 51002.4390243902
52 53042.380952381
53 55122.3255813954
54 57242.2727272727
55 59402.2222222222
56 61602.1739130435
57 63842.1276595745
58 66122.0833333333
59 68442.0408163265
60 70802
61 73201.9607843137
62 75641.9230769231
63 78121.8867924528
64 80641.8518518519
65 83201.8181818182
66 85801.7857142857
67 88441.7543859649
68 91121.724137931
69 93841.6949152542
70 96601.6666666667
71 99401.6393442623
72 102241.612903226
73 105121.587301587
74 108041.5625
75 111001.538461538
76 114001.515151515
77 117041.492537313
78 120121.470588235
79 123241.449275362
80 126401.428571429
81 129601.408450704
82 132841.388888889
83 136121.369863014
84 139441.351351351
85 142801.333333333
86 146201.315789474
87 149641.298701299
88 153121.282051282
89 156641.265822785
90 160201.25
91 163801.234567901
92 167441.219512195
93 171121.204819277
94 174841.190476191
95 178601.176470588
96 182401.162790698
97 186241.149425287
98 190121.136363636
99 194041.123595506
100 198001.111111111
};
\addlegendentry{GXSTPIR \cite{jia2020asymptotic}}
\addplot [semithick, steelblue31119180, mark=diamond*, mark size=3, mark options={solid, fill = steelblue31119180}]
table {%
11 2300
12 1826.66666666667
13 1610
14 1496
15 1433.33333333333
16 1400
17 1385
18 1382.22222222222
19 1388
20 1400
21 1416.66666666667
22 1436.92307692308
23 1460
24 1485.33333333333
25 1512.5
26 1541.17647058824
27 1571.11111111111
28 1602.10526315789
29 1634
30 1666.66666666667
31 1700
32 1733.91304347826
33 1768.33333333333
34 1803.2
35 1838.46153846154
36 1874.07407407407
37 1910
38 1946.20689655172
39 1982.66666666667
40 2019.35483870968
41 2056.25
42 2093.33333333333
43 2130.58823529412
44 2168
45 2205.55555555556
46 2243.24324324324
47 2281.05263157895
48 2318.97435897436
49 2357
50 2395.12195121951
51 2433.33333333333
52 2471.62790697674
53 2510
54 2548.44444444444
55 2586.95652173913
56 2625.53191489362
57 2664.16666666667
58 2702.85714285714
59 2741.6
60 2780.39215686275
61 2819.23076923077
62 2858.11320754717
63 2897.03703703704
64 2936
65 2975
66 3014.0350877193
67 3053.10344827586
68 3092.20338983051
69 3131.33333333333
70 3170.49180327869
71 3209.67741935484
72 3248.88888888889
73 3288.125
74 3327.38461538461
75 3366.66666666667
76 3405.97014925373
77 3445.29411764706
78 3484.63768115942
79 3524
80 3563.38028169014
81 3602.77777777778
82 3642.19178082192
83 3681.62162162162
84 3721.06666666667
85 3760.52631578947
86 3800
87 3839.48717948718
88 3878.98734177215
89 3918.5
90 3958.02469135802
91 3997.56097560976
92 4037.10843373494
93 4076.66666666667
94 4116.23529411765
95 4155.81395348837
96 4195.40229885057
97 4235
98 4274.60674157303
99 4314.22222222222
100 4353.84615384615
};
\addlegendentry{Ours}
\end{axis}

\end{tikzpicture}}
    \caption{\rev{Communication Cost in $\nlabels \nclasses$ symbols in $\F_q$ compared to our three-stage protocol paired with GXSTPIR \cite{jia2020asymptotic} and the Star-Product scheme with optimized storage code dimension $\kstorage^\star$ according to \cref{lemma:optimal_comm}. The latter is limited to $\rep = \nclients $, i.e., to non-graph-based settings. The parameters are chosen as $\nclients=100$, $\nobjectives=20$, and $\ncolluderss=\ncolludersq=5$.} \vspace{-.5cm}}
    \label{fig:communication_cost_100}
\end{figure}

\paragraph{Properties of the Scheme} We state in the following the most important properties of our proposed scheme, which is the sharing rate, the PIR rate, and the privacy guarantee.

\begin{proposition}[Sharing Rate]
    The rate of the proposed sharing scheme is
    \begin{equation*}
        {R}_{\mathrm{share}} = \frac{\rep-\ncolluderss-\ncolludersq+1}{2\nobjectives \rep (\rep-1)}.
    \end{equation*}
\end{proposition}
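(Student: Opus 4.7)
The plan is to evaluate $R_{\mathrm{share}}$ directly from its definition by separately computing the entropy of the desired quantity (numerator) and the total entropy of all messages received by all clients during the sharing stage (denominator).

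\emph{Numerator.} First, I would argue that the quantity of interest $\sum_{\client \in \incidentedge[\queriedobjective]} \function[\queriedobjective](\data)$ is a tuple of $\nlabels$ label vectors in $\F_q^\nclasses$. Treating each entry as carrying full entropy $\log q$, the numerator evaluates to $\nlabels \nclasses$ symbols in $\F_q$.

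\emph{Denominator.} Next, I would count messages objective-by-objective. For each objective $\objective \in [\nobjectives]$ and each partition $\partition \in [\npartitions]$ with $\npartitions = \nlabels/(\kstorage-\ncolluderss)$, the scheme prescribes that every client $\client \in \incidentedge$ transmit the evaluation $\ssclient[\alpha_{\clientidxb}] \in \F_q^\nclasses$ of the encoding polynomial from \cref{eq:encoding_poly} to every other client $\clientidxb \in \incidentedge \setminus \{\client\}$. Summing over the $\rep(\rep-1)$ ordered pairs within each hyperedge, the $\npartitions$ partitions, and the $\nobjectives$ objectives yields
\begin{equation*}
\sum_{\client=1}^{\nclients} \entropy{\allreceived[\client]} = \frac{\nobjectives \rep(\rep-1) \nlabels \nclasses}{\kstorage-\ncolluderss} \quad \text{symbols in } \F_q.
\end{equation*}

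Finally, forming the ratio gives $R_{\mathrm{share}} = (\kstorage-\ncolluderss)/(\nobjectives \rep(\rep-1))$; substituting $\kstorage = (\rep + \ncolluderss - \ncolludersq + 1)/2$, so that $\kstorage - \ncolluderss = (\rep - \ncolluderss - \ncolludersq + 1)/2$, will produce the claimed rate. The main obstacle will be justifying that each transmitted share indeed attains full entropy $\nclasses \log q$ bits (so that the counted symbol transmissions match $\sum_\client \entropy{\allreceived[\client]}$) and that the numerator attains $\nlabels \nclasses \log q$ bits. The former follows because the encoding polynomial in \cref{eq:encoding_poly} contains $\ncolluderss \geq 1$ i.i.d.\ uniform padding coefficients $\rlogits[\tau] \in \F_q^\nclasses$, making every single evaluation uniform on $\F_q^\nclasses$; the latter follows from the assumption that labels are quantized to fill $\F_q^\nclasses$ with essentially full entropy. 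Once these entropy claims are formalized, the plugging-in is routine.
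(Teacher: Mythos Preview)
Your proposal is correct and follows essentially the same approach as the paper: both compute the ratio of the $\nlabels\nclasses$ desired symbols to the $\nobjectives\rep(\rep-1)$ transmitted shares (each of size $\nclasses$, one per partition), and then substitute the choice $\kstorage-\ncolluderss=(\rep-\ncolluderss-\ncolludersq+1)/2$. Your write-up is in fact more careful than the paper's, which simply asserts the worst-case assumption that all function results are independent and uniformly distributed without spelling out the one-time-pad argument for the shares.
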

\begin{proof}
    We assume in the worst case that all clients' function results are independent and uniformly distributed. By the above choice of $\kstorage$, for each objective the number of labels contained in each secret sharing according to \eqref{eq:encoding_poly} is $\frac{\rep-\ncolluderss-\ncolludersq+1}{2}$, and the number of shares (of the same size) transmitted is given by $\rep (\rep-1)$. Since the federator is only interested in one out of all $\nobjectives$ objectives, the rate deteriorates by $\nobjectives$.
\end{proof}

\begin{theorem}[PIR Rate]
    The rate of the proposed PIR scheme is
    \begin{align*}
       {R}_{\mathrm{PIR}} = \frac{\rep-\ncolludersq-\ncolluderss+1}{2\nclients} \stackrel{(\star)}{=} \frac{\rep-\kstorage-\ncolludersq+1}{\nclients}
    \end{align*}
    where $(\star)$ holds for $\rep \leq 2\kstorage + \ncolludersq-1$ and $\ncolluderss = 2\kstorage - \rep + \ncolludersq -1$.
\end{theorem}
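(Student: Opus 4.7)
The plan is to compute the rate directly from its definition, $R_{\mathrm{PIR}} = \entropy{\sumioqueried \function[\queriedobjective](\data)} / \sumc \entropy{\answer}$, by counting the number of $\F_q$-symbols in the numerator (the desired aggregate) and the denominator (the total download). For the numerator, I would adopt the worst case in which the clients' labels are mutually independent and uniform over $\F_q^\nclasses$. Under this assumption each coordinate of each aggregate $\logitpartqueriedsum$ is a sum of independent uniform $\F_q$-elements and hence itself uniform over $\F_q$; since there are $\nlabels$ such aggregate labels, each of dimension $\nclasses$, we get $\entropy{\sumioqueried \function[\queriedobjective](\data)} = \nlabels \nclasses$ in $q$-ary units.

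For the denominator, I would use the fact that by construction each client $\client$ transmits the vector $\answer = \{\answerpart[1], \ldots, \answerpart[\npartitions]\}$, where each $\answerpart = \ssanswerpart[\alpha_\client]$ is a single evaluation of the answer polynomial and hence a vector in $\F_q^\nclasses$ (the query-storage product being elementwise in the $\nclasses$ class-coordinates). Therefore $\sumc \entropy{\answer} = \nclients \npartitions \nclasses$, and substituting $\npartitions = \nlabels/(\kstorage-\ncolluderss)$ gives $\sumc \entropy{\answer} = \nclients \nlabels \nclasses / (\kstorage-\ncolluderss)$. Taking the ratio yields $R_{\mathrm{PIR}} = (\kstorage - \ncolluderss)/\nclients$, and substituting the scheme's design choice $\kstorage = (\rep - \ncolludersq + \ncolluderss + 1)/2$ from the sharing stage gives the first claimed expression $(\rep - \ncolludersq - \ncolluderss + 1)/(2\nclients)$.

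For the equality $(\star)$, I would observe that the stated hypothesis $\ncolluderss = 2\kstorage - \rep + \ncolludersq - 1$ is just the design equation rearranged for $\ncolluderss$, with the inequality $\rep \leq 2\kstorage + \ncolludersq - 1$ ensuring that $\ncolluderss \geq 0$ so the scheme is well-defined. Direct substitution of this expression for $\ncolluderss$ into $(\rep - \ncolludersq - \ncolluderss + 1)/(2\nclients)$ and simplifying turns the numerator $2\rep - 2\ncolludersq - 2\kstorage + 2$ into $2(\rep - \kstorage - \ncolludersq + 1)$, yielding $(\rep - \kstorage - \ncolludersq + 1)/\nclients$.

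I do not expect a substantial obstacle here; the only subtlety is to make sure the per-client download indeed contributes exactly $\npartitions \nclasses$ independent $\F_q$-symbols (no less due to any collapse among the $\npartitions$ partitions, no more due to hidden overhead in the query or dual coefficients). This follows directly from the structure of the answer polynomial being evaluated at a single point $\alpha_\client$ per partition and from the independence of the partitions' information symbols, which is the crux that lets the numerator and denominator be read off cleanly without any information-theoretic refinement.
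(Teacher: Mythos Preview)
Your proposal is correct and follows essentially the same symbol-counting argument as the paper: both compute $(\kstorage-\ncolluderss)/\nclients$ by counting $\kstorage-\ncolluderss$ useful symbols recovered per $\nclients$ downloaded symbols, then substitute the design relation $\kstorage=(\rep-\ncolludersq+\ncolluderss+1)/2$. The only presentational difference is that the paper first writes the per-edge rate $\frac{\kstorage-\ncolluderss}{2\kstorage+\ncolludersq-\ncolluderss-1}$ (from the answer-polynomial degree), sets $\rep=2\kstorage+\ncolludersq-\ncolluderss-1$, and then explicitly notes that all $\nclients$ clients---not just the $\rep$ in $\incidentedge[\queriedobjective]$---must be queried for privacy, which is the justification for the $\nclients$ in your denominator that you take for granted.
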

\begin{proof}
The rate of the PIR scheme for each $\edge \in \edges$ is $\frac{\kstorage-\ncolluderss}{2\kstorage + \ncolludersq - \ncolluderss -1}$. Setting $\rep = 2\kstorage + \ncolludersq - \ncolluderss -1$, we obtain for $\rep \leq 2\kstorage + \ncolludersq-1$ a per-objective rate of $\frac{\rep-\kstorage-\ncolludersq+1}{\rep}$, where $\ncolluderss = 2\kstorage - \rep + \ncolludersq -1$. Since we also need to query clients $\client \in [\nclients] \setminus \incidentedge[\queriedobjective]$ for reasons of privacy, the overall rate of the PIR scheme is $\frac{\rep-\kstorage-\ncolludersq+1}{\nclients}$. 
\end{proof}
The  communication cost is, hence, $\frac{\nclients}{\rep-\kstorage-\ncolludersq+1}$. 
Let the set of all messages received by client $\clientidxb$ be $\allreceived[\client] \define \{\ssclient[\evalc[\clientidxb]]\}_{\client \in \incidentedge \setminus \{\clientidxb\}, \objective \in \incidentvertex[\clientidxb]}$. Further, let the set of all messages received by all clients in $\mathcal{T}\subset[\nclients]$ be $\allreceivedcol[\mathcal{T}] \define \{\allreceived[\client]\}_{\client \in \mathcal{T}}$.

\begin{theorem}[Privacy from Clients and Objective-Hiding] \label{thm:privacy}
    The clients' computation results are private against any set of $\ncolluderss$ clients (cf. \cref{def:privacy_clients}). The objective $\queriedobjective$ is private against any set of $\ncolludersq$ clients (cf. \cref{def:objective_hiding}), i.e., for any two sets of clients $\colss, \colquery \subset [\nclients]\setminus\{\client\}, \vert \colss \vert \leq \ncolluderss, \vert \colquery \vert \leq \ncolludersq$, we have
    \begin{align*}
        \condmutinf{\!\{\logitrv\}_{\objective \in \incidentvertex, \labelidx \in [\nlabels]}}{\allreceivedcol[\colss]\!\!}{\!\!\{\logitrv\}_{\objective \in \incidentvertex, \client \in \colss, \labelidx\in [\nlabels]}\!} &= 0, \forall \client \! \in \! [\nclients],\\
        \mutinf{\{\queryrvcol\}, \allcol}{\queriedobjective} &= 0.
    \end{align*}    
\end{theorem}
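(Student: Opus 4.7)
The plan is to reduce both claims to the following algebraic fact: for any $t$ distinct non-zero points $\evalc[1],\ldots,\evalc[t]$ and any polynomial over $\F_q^{\nclasses}$ whose top $z$ coefficients are independent uniform random vectors, the $t$ evaluations at those points are uniformly distributed in $\F_q^{\nclasses \cdot t}$, and hence independent of the remaining (data-bearing) coefficients, whenever $t \le z$. This is because the linear map from the random padding to the evaluations is a $t \times z$ Vandermonde submatrix at distinct evaluation points, which has full row rank.

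For the data privacy claim, I would fix a collusion set $\colss$ with $\vert \colss \vert \le \ncolluderss$ and an honest client $\client \notin \colss$. For each $\objective \in \incidentvertex$ and $\partition \in [\npartitions]$, the shares $\{\ssclient[\evalc[\clientidxb]]\}_{\clientidxb \in \colss}$ received by the colluders from client $\client$ decompose as an affine function of the labels $\{\logitpartentry\}_{\partidx}$ and the independent uniform padding $\{\rlogits[\tau]\}_{\tau=1}^{\ncolluderss}$, where the padding contributes through an invertible Vandermonde block by the fact above. Hence these shares are uniform over $\F_q^{\nclasses \cdot \vert \colss \vert}$ and independent of the labels, even when conditioning on the colluders' own computed labels, because the padding is sampled independently of all training data. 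Since fresh, independent padding is drawn for every triple $(\objective, \partition, \client)$, the chain rule of mutual information extends the single-triple conclusion to the joint distribution of all shares that make up $\allreceivedcol[\colss]$, yielding the first assertion.

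For objective-hiding, I would decompose $\mutinf{\{\queryrvcol\}, \allcol}{\queriedobjective}$ via the chain rule. The storage $\allcol$ is generated entirely in the sharing stage, which uses only clients' data and locally drawn randomness and never touches $\queriedobjective$; hence $\allcol$ is independent of $\queriedobjective$. For each $(\objective, \partition)$, the queries $\{\ssquery[\evalc[\clientidxb]]\}_{\clientidxb \in \colquery}$ received by $\colquery$ form an affine function of the deterministic indicator $\querymessage$ and the independent uniform padding $\{\rquery[\tau]\}_{\tau=1}^{\ncolludersq}$, whose coefficient matrix again contains an invertible Vandermonde block of size $\vert \colquery \vert \le \ncolludersq$. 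Thus these evaluations are uniform over $\F_q^{\nclasses \cdot \vert \colquery \vert}$ regardless of whether $\objective = \queriedobjective$, so the query tuple is distributed identically for every value of $\queriedobjective$. Independence across $(\objective, \partition)$ follows because each uses fresh padding, and combining with $\allcol \perp \queriedobjective$ via the chain rule yields the second assertion.

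The main obstacle I anticipate is careful bookkeeping of the conditioning in the first claim: the colluders' own function outputs may be correlated with the labels of honest clients, so one must argue that the sharing randomness is independent of all training data and therefore the shares remain uniformly distributed even after conditioning on the colluders' labels. Once this is handled for a single $(\objective, \partition, \client)$ triple, the extension to the joint distribution over all triples and the combination with the objective-hiding argument are routine chain-rule manipulations.
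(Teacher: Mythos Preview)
Your proposal is correct and follows essentially the same route as the paper: reduce each privacy claim to the secrecy of a single secret-sharing instance via the chain rule over independent $(\objective,\partition)$ triples, then invoke the uniform-distribution property of any $\leq z$ evaluations. The only cosmetic difference is that you make the Vandermonde argument explicit, whereas the paper treats the per-instance secrecy of the McEliece--Sarwate sharing as a black-box citation and, for objective-hiding, adds a data-processing step $\mutinf{\{\queryrvcol\}}{\queriedobjective} \leq \mutinf{\{\queryrvcol\}}{\querymessage}$ before invoking that property; both arrive at the same conclusion by the same mechanism.
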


\begin{theorem}[Correctness] \label{thm:correctness} The sum of labels of interest $\sumioqueried \logitqueried$, $\labelidx\in[\nlabels]$, is decodable from the answers, i.e., 
    \begin{align*}
        \entropy{\!\bigg\{\!\sum_{\client \in \incidentedge[\queriedobjective]} \!\! \logitpartqueriedrv\bigg\}_{\partition \in [\npartitions], \partidx\in [\kstorage-\ncolluderss]} \hspace{-.5cm} \mid \{\answerpart\}_{\partition \in [\npartitions], \client\in \incidentedge[\queriedobjective]}\!} = 0.
    \end{align*}
\end{theorem}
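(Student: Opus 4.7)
The plan is to track the answer polynomials explicitly and show that (i) all interference cancels when the federator combines the clients' answers, and (ii) the residual system is invertible. I will start by substituting the definitions of the storage polynomial $\sumssclient[x]$ from~\eqref{eq:encoding_poly} and the query polynomial $\ssquery[x]$ from~\eqref{eq:query_poly} into $\ssanswerpartsym[x] = \sumi \dualco \sumssclient[x] \ssquery[x]$. Splitting the inner sum over $t \in \incidentvertex$ into the ``useful'' term $t=\queriedobjective$ and the ``interference'' terms $t \neq \queriedobjective$, I observe that, since $\querymessage = \mathbf{0}$ for $t \neq \queriedobjective$, the product $\sumssclient[x] \ssquery[x]$ for $t \neq \queriedobjective$ has no coefficients at powers $x^0, x^1, \dots, x^{\kstorage-\ncolluderss-1}$; all such terms only arise from $t=\queriedobjective$, where the constant query component $\mathbf{1}$ reproduces exactly $\sum_{\client \in \incidentedge[\queriedobjective]} \logitpartentryqueried$ in the coefficient of $x^{\partidx-1}$ for $\partidx \in [\kstorage-\ncolluderss]$.

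Next, I will form the federator's spectral combination $\evalsum = \sumc \evalc^{-\evalidx} \answerpart$ for $\evalidx \in [\kstorage-\ncolluderss]$ and swap the order of summation so that the inner sum for each objective $t$ and each power $x^a$ takes the form $\sumio \dualco \evalc^{a-\evalidx}$. The key tool here is the standard Lagrange/dual-GRS identity
\[
\sumio \dualco \evalc^{b} = 0 \quad \text{for } 0 \le b \le \rep - 2,
\]
which follows directly from the definition $\dualco = \bigl(\prod_{\clientidxb \in \incidentedge \setminus \{\client\}} (\evalc - \evalc[\clientidxb])\bigr)^{-1}$ via partial fractions (or residues at infinity). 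I will then verify that, for the chosen $\kstorage = (\rep - \ncolludersq + \ncolluderss + 1)/2$, every power $a$ of $x$ in the interference terms ($t \neq \queriedobjective$, where $a$ ranges over $\kstorage-\ncolluderss, \dots, 2\kstorage + \ncolludersq - \ncolluderss - 2$) together with $\evalidx \in [\kstorage-\ncolluderss]$ produces $b = a - \evalidx$ in the range $[0, \rep-2]$, so all interference vanishes. The same identity also kills the coefficients $a \ge \kstorage - \ncolluderss$ coming from the $t=\queriedobjective$ cross-terms, leaving only the ``information'' powers $\partidx \le \kstorage-\ncolluderss$, which reproduces the displayed expression
\[
\evalsum = \sum_{\partidx=1}^{\kstorage-\ncolluderss} \logitpartqueriedsum[\partidx] \sumioqueried \dualcoqueried \evalc^{\partidx - \evalidx - 1}.
\]

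Finally, I will argue invertibility of $\mathbf{P}$. For $\partidx > \evalidx$, the exponent $\partidx - \evalidx - 1$ lies in $\{0,1,\dots,\kstorage-\ncolluderss-2\} \subseteq \{0,\dots,\rep-2\}$, so the same dual-GRS identity forces all entries strictly above the diagonal to vanish; hence $\mathbf{P}$ is lower triangular. The diagonal entries are all equal to $\sumioqueried \dualcoqueried \evalc^{-1}$, which is nonzero because the evaluation points $\evalc = \eval^\client$ come from a primitive element of $\F_q$ (so $\evalc \neq 0$) and the quantity is a single fixed nonzero field element by another direct residue computation. Therefore $\mathbf{P}$ is invertible, and applying $\mathbf{P}^{-1}$ recovers $\{\logitpartqueriedsum[\partidx]\}_{\partidx \in [\kstorage-\ncolluderss]}$ deterministically from $\{\evalsum\}_{\evalidx}$, for every partition $\partition \in [\npartitions]$. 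This yields the claimed conditional entropy $0$.

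The main obstacle I anticipate is the bookkeeping in step two: carefully verifying that every nonzero-coefficient power $a$ arising from $\sumssclient[x]\ssquery[x]$ (for both $t=\queriedobjective$ and $t \neq \queriedobjective$) combined with each relevant $\evalidx$ falls in the exact window where the dual-GRS identity applies. This hinges on the specific choice $\kstorage = (\rep - \ncolludersq + \ncolluderss + 1)/2$, and the degree accounting must be done inclusively at both endpoints so that no stray term escapes cancellation.
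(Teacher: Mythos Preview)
Your proposal is correct and tracks the paper's proof almost step for step: expand $\sumssclient[x]\ssquery[x]$, form the combinations $\evalsum=\sumc \evalc^{-\evalidx}\answerpart$, kill every interference term via the dual-GRS identity $\sumio\dualco\evalc^{b}=0$ for $0\le b\le\rep-2$, and then invert the lower-triangular matrix $\mathbf{P}$.

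The one substantive difference is in how you argue the diagonal of $\mathbf{P}$ is nonzero. The paper proves a separate lemma showing $\sumioqueried\dualcoqueried\evalc^{-\evalidx}\neq 0$ for every $\evalidx\in[\kstorage-\ncolluderss]$ via a Vandermonde rank argument, which is where the specific construction $\evalc=\eval^\client$ with $\eval$ primitive and the field-size condition $q>\rep+\kstorage-\ncolluderss-1$ actually enter. Your residue route is more elementary: from the partial-fraction identity $\prod_{\client\in\incidentedge[\queriedobjective]}(x-\evalc)^{-1}=\sumioqueried \dualcoqueried/(x-\evalc)$ one reads off $\sumioqueried\dualcoqueried\evalc^{-1}=-\bigl.\prod_{\client}(x-\evalc)^{-1}\bigr|_{x=0}=(-1)^{\rep+1}/\prod_{\client}\evalc$, which is nonzero as soon as every $\evalc\neq 0$. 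Since $\mathbf{P}$ is lower triangular with constant diagonal, this single $\evalidx=1$ case already suffices for invertibility, so your argument is both shorter and needs a weaker hypothesis than the paper's lemma. It would strengthen your write-up to spell this one-line evaluation out explicitly rather than leave it as ``another direct residue computation.''
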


We compare the communication cost of our scheme in \cref{fig:communication_cost,fig:communication_cost_100} to the application of two PIR schemes from the literature \rev{for $\nclients=10$ and $\nclients=100$, respectively,} and provide the corresponding rates in \cref{tab:rates}. The details of the comparison are deferred to \cref{sec:comparison}.

\section{Extension to Privacy from the Federator}
While the above scheme complies with the privacy notion according to \cref{def:privacy_clients}, privacy from the federator as in \cref{def:privacy_federator} is not ensured. Therefore, recall the answer polynomial $\ssanswerpart$ above. The crucial aspect is that the interference terms $\ranswer[\tau]$ contain sensitive information about $\logitpart$ for $\partidx\in[\kstorage-\ncolluderss]$ and $\objective \neq \queriedobjective$, i.e., they depend on the function results of the clients, thereby leaking potential information about clients' results beyond the objective $\queriedobjective$ of interest.
To ensure user-side privacy, where the federator does not learn anything about the clients models beyond the objective of interest, we assume the existence of shared randomness among all clients unknown to the federator. Leveraging this shared randomness, the clients construct a randomized polynomial
\begin{align*}
    \ssrandpart = \sum_{\tau=1}^{\kstorage+\ncolludersq-1} x^{\kstorage-\ncolluderss+\tau-1} \rrand[\tau],
\end{align*}
where $\{\rrand[\tau]\}_{\tau \in [\kstorage+\ncolludersq-1]}$ is known to all clients, but unknown to the federator. For each partition $\partition \in [\npartitions]$, each client $\client$ replies to the queries $\ssquery, \objective \in \incidentvertex$ with the answer
\begin{align*}
    \answerpartsym = \sumo \dualco \sumssclient[\alpha_{\client}] \ssquery[\alpha_{\client}] + \randpart,
\end{align*}
which is an evaluation of the \emph{re-randomized} polynomial
\begin{align*}
    \ssanswerpartsym &= %
    \dualcoqueried \sum_{\partidx=1}^{\kstorage-\ncolluderss} x^{\partidx-1} \sum_{\client\in \incidentedge[\queriedobjective]} \logitpartentryqueried \\
    &\hspace{.8cm}+ \sum_{\tau=1}^{\kstorage+\ncolludersq-1} x^{\kstorage-\ncolluderss+\tau-1} \bigg(\rrand[\tau] + \sumi \dualco \ranswer[\tau]\bigg),
\end{align*}
which is, by the one-time pad, guarantees the privacy of the $\ranswer[\tau]$ that contain potentially sensitive information about the clients' computation beyond the objective of interest. The recovery process as in \cref{sec:scheme} remains unchanged. Let the answer $\answer$ received from client $\client$ be $\answer \define \{\answerpartsym\}_{\partition \in [\npartitions]}$, then we have the following privacy statement.

\begin{theorem}[Symmetric Privacy] \label{thm:sym_privacy}
    In addition to the privacy satisfied according to \cref{thm:privacy}, the federator learns nothing beyond the aggregation of $\rep$ clients' predictions for the objective $\queriedobjective$ of interest. Formally,
    \begin{align*}
        \condmutinf{\answer[{[\nclients]}], \{\queryrv[{[\nclients]}]\}}{\{\logitrv\}}{\bigg\{\sumioqueried \logitqueriedrv\bigg\}_{\labelidx \in [\nlabels]}} = 0.
    \end{align*}
    where for clarity we define $\{\queryrv[{[\nclients]}]\} \define \{\queryrv[{[\nclients]}]\}_{\objective \in [\nobjectives], \partition \in [\npartitions]}$ and $\{\logitrv\} \define \{\logitrv\}_{\objective \in \incidentvertex, \client \in [\nclients], \labelidx\in [\nlabels]}$.
\end{theorem}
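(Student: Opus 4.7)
The plan is to decompose the conditional mutual information via the chain rule, $\condmutinf{A,Q}{Y}{\Sigma} = \condmutinf{Q}{Y}{\Sigma} + \condmutinf{A}{Y}{Q,\Sigma}$, where I abbreviate $A \triangleq \answer[{[\nclients]}]$, $Q \triangleq \{\queryrv[{[\nclients]}]\}_{\objective \in [\nobjectives],\partition\in[\npartitions]}$, $Y \triangleq \{\logitrv\}_{\objective \in \incidentvertex,\client\in[\nclients],\labelidx\in[\nlabels]}$, and $\Sigma \triangleq \{\sumioqueried \logitqueriedrv\}_{\labelidx \in [\nlabels]}$, and show that both terms vanish. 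The first is immediate: each query polynomial $\ssquery$ is constructed by the federator using only $\queriedobjective$ and the fresh uniform randomness $\{\rquery[\tau]\}$, hence is unconditionally independent of $Y$; by the chain rule applied to $I(Q;Y,\Sigma)=0$, conditioning on the function $\Sigma$ of $Y$ preserves this independence.

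For the second term I would fix an arbitrary realization of $Q$ and analyze the re-randomized answer polynomial $\ssanswerpartsym$. Its coefficient vector splits into two blocks. The low-degree block, consisting of the $\kstorage-\ncolluderss$ coefficients of degrees $0,\dots,\kstorage-\ncolluderss-1$, equals $\dualcoqueried \sum_{\client \in \incidentedge[\queriedobjective]} \logitpartentryqueried$ and is therefore measurable with respect to $\Sigma$ together with the fixed queries. The high-degree block, consisting of the $\kstorage+\ncolludersq-1$ coefficients of degrees $\kstorage-\ncolluderss,\dots,2\kstorage+\ncolludersq-\ncolluderss-2$, equals $\rrand[\tau]+\sumi \dualco \ranswer[\tau]$ for $\tau \in [\kstorage+\ncolludersq-1]$. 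Because every client's reply $\answerpartsym$ is an evaluation of the \emph{same} polynomial $\ssanswerpartsym$ at the distinct point $\evalc$, the vector $A$ is a deterministic function of this coefficient vector, so it suffices to analyze the joint distribution of the coefficients themselves.

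The crux is a one-time pad argument on the high-degree block. The shared randomness $\{\rrand[\tau]\}$ is sampled uniformly from $\F_q^\nclasses$ and, by assumption, is unknown to the federator and independent of $Y$, of the secret-sharing randomness, and of $Q$; hence each masked coefficient $\rrand[\tau]+\sumi \dualco \ranswer[\tau]$ is marginally uniform on $\F_q^\nclasses$ and independent of $\ranswer[\tau]$, the only term that could still encode information about $Y$ beyond $\Sigma$. I expect this to be the main obstacle, since the same $\{\rrand[\tau]\}$ is reused in every client's evaluation and one must verify that the federator recovers only the masked coefficients jointly, with no per-client residual leakage; this works because $\ssrandpart$ is added to the polynomial once before evaluation and its degree budget is chosen precisely to cover all $\kstorage+\ncolludersq-1$ interference degrees, so the federator cannot disentangle $\rrand[\tau]$ from $\sumi \dualco \ranswer[\tau]$. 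Combining the two steps, conditionally on $Q$ and $\Sigma$ the view $A$ is a deterministic function of $\Sigma$ and a vector of fresh uniform symbols independent of $Y$, giving $\condmutinf{A}{Y}{Q,\Sigma}=0$ and hence the theorem.
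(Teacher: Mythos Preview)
Your chain-rule decomposition and the treatment of the query term are a clean formalization of the paper's argument, which likewise reduces to (i) independence of the queries from the labels and (ii) a one-time-pad argument on the interference coefficients.

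There is, however, a gap in your second step. You assert that ``every client's reply $\answerpartsym$ is an evaluation of the \emph{same} polynomial $\ssanswerpartsym$ at the distinct point $\evalc$'', and from this you reduce the analysis of the $\nclients$ answers to the analysis of a single coefficient vector. This is not true in the graph-based setting: both the low-degree scaling $\dualcoqueried$ and the high-degree interference $\sum_{\objective\in\incidentvertex}\dualco\,\ranswer[\tau]$ depend on the client index $\client$ through the dual weights $\dualco$ and the incidence set $\incidentvertex$. Different clients therefore evaluate \emph{different} polynomials, all masked by the same $\ssrandpart$, so the reduction ``$A$ is a deterministic function of this coefficient vector'' collapses, and your dismissal of ``per-client residual leakage'' rests precisely on the false single-polynomial premise. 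The paper's proof does not make this reduction: it invokes the one-time pad directly on each client-specific interference term $\sum_{\objective\in\incidentvertex}\dualco\,\ranswer[\tau]$ and explicitly refers to ``all clients' polynomials'' in the plural, rather than claiming a common polynomial across clients. To salvage your route you would have to argue that, jointly over $\client\in[\nclients]$, the $Y$-dependent variation of the interference lies in the range of the map $\{\rrand[\tau]\}\mapsto(\randpart)_{\client\in[\nclients]}$; the single-polynomial shortcut does not establish this.
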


\begin{proof}
    By the one-time pad, all interference terms $\sumi \dualco \ranswer[\tau]$ containing sensitive information are perfectly hidden from the federator, and hence, all labels $\{\logitrv\}_{\client \in [\nclients], \objective \in \incidentvertex, \labelidx\in [\nlabels]}$ with the exception of $\{\sumioqueried \logitqueriedrv\}_{\labelidx \in [\nlabels]}$ are statistically independent from the answers $\answer[{[\nclients]}]$. Further, the queries $\{\queryrv[{[\nclients]}]\}_{\objective \in [\nobjectives], \partition \in [\npartitions]}$ are independent of the labels. This holds even if all clients' polynomials could be exactly reconstructed by the federator.
\end{proof}

\section{Comparison to Cross-Subspace-Alignment and Star-Product Codes} \label{sec:comparison}

Our scheme encapsulates a secret sharing stage and a new PIR scheme for graph-based MDS coded storage patterns specifically tailored to generalized secret sharing schemes. When restricting to the suboptimal Shamir Secret Sharing in the sharing stage, the method of \cite{jia2020asymptotic} can be used instead of our PIR scheme. In a non-graph-based setting (when $\rep = \nclients$), known methods from PIR over MDS coded data apply for generalized secret sharing schemes due to their MDS structure. However, for optimal overall communication costs, we formulate and solve an optimization problem that trades the sharing rate against the PIR rate and finds the optimal operating point. We elaborate on the two extreme cases in the following, and provide a comparison to our scheme.

\begin{figure}[!t]
    \centering
    \resizebox{.9\linewidth}{!}{
    \begin{tikzpicture}

\definecolor{darkgray176}{RGB}{176,176,176}
\definecolor{darkorange25512714}{RGB}{255,127,14}
\definecolor{forestgreen4416044}{RGB}{44,160,44}
\definecolor{steelblue31119180}{RGB}{31,119,180}

\begin{axis}[
tick align=outside,
tick pos=left,
x grid style={darkgray176},
xlabel={\(\displaystyle \rho\)},
xmajorgrids,
xmin=2.65, xmax=10.35,
xtick style={color=black},
y grid style={darkgray176},
ylabel={Sharing rate},
ymajorgrids,
ymin=0.000333333333333333, ymax=0.0174444444444444,
ytick style={color=black}
]
\addplot [semithick, steelblue31119180, dashed, mark=diamond*, mark size=3, mark options={solid}]
table {%
3 0.0166666666666667
4 0.0125
5 0.01
6 0.00833333333333333
7 0.00714285714285714
8 0.00625
9 0.00555555555555555
10 0.005
};
\addplot [semithick, darkorange25512714, dashed, mark=diamond*, mark size=3, mark options={solid}]
table {%
3 0.0166666666666667
4 0.00833333333333333
5 0.005
6 0.00333333333333333
7 0.00238095238095238
8 0.00178571428571429
9 0.00138888888888889
10 0.00111111111111111
};
\addplot [semithick, forestgreen4416044, dashed, mark=diamond*, mark size=3, mark options={solid}]
table {%
10 0.00888888888888889
};
\end{axis}

\end{tikzpicture}}
    \caption{Secret Sharing rates compared to GXSTPIR \cite{jia2020asymptotic} and the Star-Product scheme with optimized storage code dimension $\kstorage^\star$ according to \cref{lemma:optimal_comm}. The latter is limited to $\rep = \nclients$, i.e., to non-graph-based settings. The parameters are chosen as $\nclients=10$, $\nobjectives=10$, and $\ncolluderss=\ncolludersq=1$. \vspace{-.5cm}}
    \label{fig:sharing_rates}
\end{figure}

\paragraph{Star-Product PIR \cite{freij2017private}} Considering non-graph-based settings (i.e., $\rep=\nclients$), since secret sharing is a Reed-Solomon code, known results from private information retrieval over MDS-coded data such as those in \cite{freij2017private} can be applied in our framework. %
Applying such results yields an interesting trade-off between the design of the storage code and the query code. This trade-off is not present in our scheme due to our construction being tailored to this setting. Star-product-based PIR schemes consist of a storage cost $\mathcal{C}$ and a query code $\mathcal{D}$. On a high level, each client (or server) returns codewords from the star-product code $\mathcal{C} \star \mathcal{D}$, and the messages of interest are encoded as erasures in the code. The rate of this PIR scheme is $(d_{\mathcal{C} \star \mathcal{D}}-1)/\nclients$, while being private against $\ncolludersq = d_{\mathcal{D}^\perp}-1$ colluders, where $d_{\mathcal{D}^\perp}$ is the minimum distance of the dual code of $\cD$. Given a desired $\ncolludersq$, the dual of $\querycode$ must satisfy $d_{\querycode^\perp} = \ncolludersq+1$ and have dimension $\kquerydual = \nclients-\ncolludersq$. Hence, $\querycode$ is an $(\nclients,\ncolludersq)$-$\grs{}$ code. If we choose the same code locators for both codes, then the star product $\productcode$ is an $(\nclients,\min\{\kstorage+\ncolludersq-1, \nclients\})$-$\grs{}$ code with minimum distance $d_{\productcode} = \nclients - \kstorage - \ncolludersq + 2$ given that $\kstorage+\ncolludersq-1 \leq \nclients$. The PIR scheme achieves a rate of $\frac{d_{\productcode}-1}{\nclients} = \frac{\nclients - \kstorage - \ncolludersq + 1}{\nclients}$ \cite{freij2017private}. Fixing $\ncolludersq$, it is desirable to choose a storage code with a small parameter $\kstorage$. However, in this case the randomness of the storage code is recovered as a by-product, and hence, the rate comes with an additional factor of $\frac{\kstorage-\ncolluderss}{\kstorage}$. The rate results as $\frac{\nclients - \kstorage - \ncolludersq + 1}{\nclients} \frac{\kstorage-\ncolluderss}{\kstorage}$.  %
The download cost of the PIR scheme is given by $\frac{\nlabels \nclasses}{\kstorage-\ncolluderss} \frac{\kstorage \nclients}{\nclients - \kstorage - \ncolludersq +1}$ symbols in $\F_{q}$. However, each of the clients is required to train $\nobjectives$ models for each potential function.

\begin{lemma} \label{lemma:optimal_comm}
    The lowest communication cost is given by
    \begin{align*}
        \frac{\nobjectives \nlabels \nclasses \cdot \nclients (\nclients-1)}{\kstorage^\star-\ncolluderss} + \frac{\nlabels \nclasses}{\kstorage^\star-\ncolluderss} \frac{\kstorage^\star \nclients}{\nclients - \kstorage^\star - \ncolludersq +1}
    \end{align*}
    symbols in $\F_{q}$, where $\kstorage^\star$ is chosen from $\{\lfloor \kstorage^\prime \rfloor, \lceil \kstorage^\prime \rceil\}$ as given below to minimize the above cost. Defining $c_1 \define (\nclients - \ncolludersq + 1)$ and $c_2 \define \nobjectives \nclients (\nclients-1)$, we have
    \begin{align*}
        \kstorage^\prime = \frac{1}{c_2 - \nclients} \left(c_1 c_2 - \sqrt{c_1^2 c_2^2 - (c_2 - \nclients) (c_1^2 c_2 + \nclients c_1 \ncolluderss)}\right).
    \end{align*}
\end{lemma}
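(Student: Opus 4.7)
The plan is to treat the communication cost as a function of a continuous variable $\kstorage \in (\ncolluderss,\, \nclients - \ncolludersq + 1)$, identify its unique interior minimizer via a first-order condition, and then round to the best integer.

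First I would write the cost (dropping the common factor $\nlabels \nclasses$ since it does not affect the argmin) as
\begin{equation*}
f(\kstorage) \;=\; \frac{\nobjectives \nclients(\nclients-1)}{\kstorage-\ncolluderss} \;+\; \frac{1}{\kstorage-\ncolluderss}\cdot\frac{\kstorage\, \nclients}{\nclients - \kstorage - \ncolludersq + 1},
\end{equation*}
and combine the two fractions over the common denominator $(\kstorage-\ncolluderss)(c_1-\kstorage)$, where $c_1 = \nclients-\ncolludersq+1$ and $c_2 = \nobjectives \nclients(\nclients-1)$. A short manipulation yields
\begin{equation*}
f(\kstorage) \;=\; \frac{c_1 c_2 - (c_2-\nclients)\kstorage}{(\kstorage-\ncolluderss)(c_1-\kstorage)}.
\end{equation*}
Since $f(\kstorage) \to +\infty$ as $\kstorage \to \ncolluderss^{+}$ and as $\kstorage \to c_1^{-}$, $f$ attains a minimum in the open interval; once I show the derivative vanishes at a unique interior point, that point is the global minimizer.

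Next I would differentiate and clear denominators. The stationarity condition $f'(\kstorage)=0$ simplifies (after expansion and cancellation of the cross terms $(c_1+\ncolluderss)\kstorage$) to the quadratic
\begin{equation*}
(c_2-\nclients)\,\kstorage^2 \;-\; 2\, c_1 c_2\, \kstorage \;+\; c_1\bigl(c_1 c_2 + \nclients \ncolluderss\bigr) \;=\; 0.
\end{equation*}
Solving by the quadratic formula gives exactly the expression for $\kstorage'$ stated in the lemma; the minus sign in front of the square root is forced by the requirement that the critical point lie in $(\ncolluderss, c_1)$. The plus-sign root exceeds $c_1$ (it sits outside the admissible window, where $f$ is negative), and so is discarded. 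Since $f$ is unimodal on the interior (strictly convex on a neighborhood of its unique critical point, which a one-line sign check on $f''$ confirms), the integer optimum must be one of the two integers bracketing $\kstorage'$, so we compare $f(\lfloor \kstorage' \rfloor)$ and $f(\lceil \kstorage' \rceil)$ and take the smaller.

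The main obstacle is purely bookkeeping in the quadratic simplification: ensuring that the cross terms collapse cleanly and that the discriminant, $c_1^2 c_2^2 - (c_2-\nclients)(c_1^2 c_2 + \nclients c_1 \ncolluderss)$, is nonnegative in the regime of interest (this follows since $c_2 = \nobjectives \nclients(\nclients-1) \gg \nclients$, so $c_2 - \nclients > 0$ and the whole subtracted term is bounded above by $c_1^2 c_2^2$ for reasonable parameters). Verifying that only the minus-sign root lies in $(\ncolluderss, c_1)$ — as opposed to an extraneous root introduced by clearing denominators — is the only subtlety, and is handled by evaluating the sign of $f'$ at the endpoints.
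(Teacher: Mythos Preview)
Your proposal is correct and follows essentially the same route as the paper: write the cost as a function of $\kstorage$, set the first-order condition, reduce to the quadratic $(c_2-\nclients)\kstorage^2 - 2c_1 c_2\kstorage + c_1^2 c_2 + \nclients c_1 \ncolluderss = 0$, and solve. Your version is, if anything, slightly more careful than the paper's---you justify the choice of the minus-sign root and the integer rounding explicitly, whereas the paper simply asserts convexity of the objective on the feasible interval and takes the relevant root without further comment.
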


\begin{figure}[!t]
    \centering
    \vspace{.27cm}
    \resizebox{.9\linewidth}{!}{
    \begin{tikzpicture}

\definecolor{darkgray176}{RGB}{176,176,176}
\definecolor{darkorange25512714}{RGB}{255,127,14}
\definecolor{forestgreen4416044}{RGB}{44,160,44}
\definecolor{lightgray204}{RGB}{204,204,204}
\definecolor{steelblue31119180}{RGB}{31,119,180}

\begin{axis}[
legend cell align={left},
legend style={
  fill opacity=0.8,
  draw opacity=1,
  text opacity=1,
  at={(0.03,0.97)},
  anchor=north west,
  draw=lightgray204
},
tick align=outside,
tick pos=left,
x grid style={darkgray176},
xlabel={\(\displaystyle \rho\)},
xmajorgrids,
xmin=2.65, xmax=10.35,
xtick style={color=black},
y grid style={darkgray176},
ylabel={PIR rate},
ymajorgrids,
ymin=0.0533333333333333, ymax=0.835555555555556,
ytick style={color=black}
]
\addplot [semithick, steelblue31119180, mark=*, mark size=3, mark options={solid}]
table {%
3 0.1
4 0.15
5 0.2
6 0.25
7 0.3
8 0.35
9 0.4
10 0.45
};
\addlegendentry{Ours}
\addplot [semithick, darkorange25512714, mark=*, mark size=3, mark options={solid}]
table {%
3 0.1
4 0.2
5 0.3
6 0.4
7 0.5
8 0.6
9 0.7
10 0.8
};
\addlegendentry{GXSTPIR \cite{jia2020asymptotic}}
\addplot [semithick, forestgreen4416044, mark=*, mark size=3, mark options={solid}]
table {%
10 0.0888888888888889
};
\addlegendentry{Tailored Star-Product \cite{freij2017private}}
\end{axis}

\end{tikzpicture}}
    \caption{Private information retrieval rates compared to GXSTPIR \cite{jia2020asymptotic} and the Star-Product scheme with optimized storage code dimension $\kstorage^\star$ according to \cref{lemma:optimal_comm}. The latter is limited to $\rep = \nclients$, i.e., to non-graph-based settings. The parameters are chosen as $\nclients=10$, $\nobjectives=10$, and $\ncolluderss=\ncolludersq=1$. \vspace{-.5cm}}
    \label{fig:pir_rates}
\end{figure}

\begin{approach}

    Consider the setting of \cref{ex:description}. We have $\nclients = 5$ clients $\client \in [5]$ and $\nlabels$ labels, split into $\npartitions = \nlabels/2$ partitions of size two. We assume no collusions between clients, i.e., $\ncolluderss=\ncolludersq=\ncolluders=1$. Let for some choice of $\nobjectives$ the optimal code dimension be $\kstorage^\star = 3$. For each $\partition \in [\npartitions]$, each client $\client$ creates shares of the form 
    \begin{equation*}
    \ssclient = \logitpartentry[1] + x \cdot \logitpartentry[2] + x^2 \cdot \rlogits[1],
    \end{equation*}
    and sends to each client $\clientidxb \neq \client$ a share $\ssclient[\alpha_{\clientidxb}]$, keeping the share $\ssclient[\alpha_\client]$ to itself. Summing the shares of all incoming clients, each client $\client$ obtains $\forall \partition \in [\npartitions]$ a share $\sumssclient[\alpha_\client]$ of the polynomial $\sumssclient$, corresponding to a codeword of an $(\nclients,3)$-$\grs{}(\alpha, \mathbf{1}_\nclients)$ code $\storagecode$, where $\alpha = (\alpha_1, \cdots, \alpha_\nclients)$ are the evaluators and $\mathbf{1}_\nclients$ the column multipliers. The communication cost in the sharing stage is $\nobjectives \frac{\nlabels}{2} \nclients (\nclients-1)$ symbols in $\F_{q}^\nclasses$.

    For privacy against $\rev{\ncolludersq} = 1$, we choose a query code $\querycode$ whose dual has dimension $\kquerydual = \nclients - \ncolluders = 4$, hence an $(\nclients,\nclients-\kquerydual)$-$\grs{}(\alpha, \mathbf{1}_\nclients)$. The star product code $\productcode$ then is an $(\nclients,3)$-$\grs{}(\alpha, \mathbf{1}_\nclients)$ code, with minimum distance $d_{\productcode} = 3$. The overall rate of the PIR scheme is $\frac{d_{\productcode}-1}{\nclients} = \frac{2}{5}$. The overall communication cost of the scheme is $\nobjectives \frac{\nlabels}{2} \nclients (\nclients-1) + \frac{5}{2\nlabels}$ symbols in $\F_{q}^\nclasses$, which is the same as for \cref{ex:mceliece}. However, this method does not apply to graph-based settings with arbitrary task assignments due to the interplay of storage and query code, but can yield better results for $\rep=\nclients$ as shown in \cref{fig:communication_cost}.
\end{approach}

\paragraph{Graph-based PIR with Cross-Subspace Alignment} When restricting to Shamir Secret Sharing schemes, the clients would construct a secret sharing for each prediction individually. If all secret sharing instances follow the same construction, cross-subspace alignment can improve the rate of the PIR scheme. In fact, the rate of GXSTPIR was shown to be $\frac{\rep-\ncolluderss-\ncolludersq}{\nclients}$ \cite{jia2020asymptotic}. For XSTPIR (non-graph based), let $\rep = \nclients$. Then we have a rate of $\frac{\rep-\ncolluderss-\ncolludersq}{\rep}$ \cite{jia2019cross}, where Shamir Secret Sharing schemes are shown to be capacity-achieving through cross-subspace alignment. However, being restricted to Shamir Secret Sharing is undesirable in our case since the rate of the sharing stage is $\frac{\kstorage-\ncolluderss}{\rep (\rep-1)}$ for $\rho \leq \kstorage$, which is the worst for $\kstorage-\ncolluderss=1$, where the rate is $\frac{1}{\rep (\rep-1)}$. The optimal rate results when $\rep$ is maximal, which yields $\frac{\rep-\ncolluderss}{\rep (\rep-1)}$. In comparison, with our proposed PIR scheme for arbitrary $\kstorage-\ncolluderss \leq \rep-\ncolluderss$, we obtain a PIR rate of $\frac{\kstorage-\ncolluderss}{2\kstorage + \ncolludersq - \ncolluderss - 1}$ for $\rep \geq 2\kstorage + \ncolludersq - \ncolluderss - 1$, which shows that for non-fixed $\rep$, large $\kstorage-\ncolluderss$ are desirable. Equivalently, we can write $\frac{\rep-\kstorage-\ncolludersq+1}{\rep} = \frac{\rep-\ncolludersq-\ncolluderss+1}{2\rep}$, hence, asymptotically in $\rep$, the rate goes to $\frac{1}{2}$.

\begin{remark}[Cross-Subspace Alignment in our Scheme]
    The authors of \cite{jia2020asymptotic} use cross-subspace alignment by a careful choice of evaluation points for different Shamir Secret Sharing instances to construct capacity-achieving PIR schemes. On the contrary, we jointly design the storage and query code to reduce the overall communication cost dominated by the sharing stage. This leads to occupying all dimensions of the answers with only one secret sharing. If the communication cost in the sharing stage was of lower importance, ideas from cross-subspace alignment could be incorporated into our framework by a deliberate choice of the evaluation points for different McEliece-Sarwate secret sharing instances.
\end{remark}

We compare in \cref{fig:communication_cost} the overall communication cost of our proposed three-stage protocol for the proposed tailored PIR scheme to the usage of prior work from the PIR literature as a substitute for our PIR scheme, as a function of $\rep \leq \nclients$ for the case when $\nclients=10$, $\nobjectives=10$, and $\ncolluderss=\ncolludersq=1$. In \cref{tab:rates}, we provide a summary of the sharing and PIR rates \rev{and the communication costs} for the three schemes as a function of the system parameters. We plot the separate sharing and PIR rates for the same parameters above in \cref{fig:pir_rates,fig:sharing_rates}, which exhibit  contrasting behavior as a function of $\rep$. Hence, the computation cost and, consequently, the utility determined by $\rep$ incur a trade-off between the sharing and the PIR rates. One can further find that our proposed solution in combination with the newly designed PIR scheme outperforms the usage of existing graph-based PIR schemes in the query stage of our method. When $\rep = \nclients$, we show that the usage of a star-product-based PIR scheme in the query stage with parameters specifically optimized for MDS coded data in form of secret sharing gives the smallest communication cost. \rev{We depict the communication cost for $\nclients=100$ clients, $\nobjectives=20$, and $\ncolluderss=\ncolludersq=5$ in \cref{fig:communication_cost_100}, and find that our scheme uniformly outperforms the application of existing graph-based PIR schemes. For $\rep=\nclients=100$, using the tailored star-product-based PIR scheme is beneficial.} \rev{According to \cref{def:privacy_clients,def:objective_hiding,def:privacy_federator}, no privacy leakage is incurred by our scheme.}

\section{Conclusion}

In this work, we introduced a new notion of objective-hiding for federated one-shot learning complemented by data privacy for the clients' data. We use tools from multi-party computation, knowledge distillation and (S)PIR, and propose a new three-stage protocol that achieves information-theoretic privacy of the federator's objective and the clients' data under a limited collusion assumption. To minimize the joint communication cost in the sharing and query stages of our framework, we proposed a novel graph-based PIR scheme for flexible task assignments specifically tailored to the setting at hand and leveraging the properties of dual GRS codes. 

Going further, the problems of mitigating the effect of stragglers and dropouts among clients, considering the presence of malicious clients deliberately trying to corrupt the process, and considering different privacy notions such as differential privacy \cite{dwork2006differential} and subset privacy \cite{raviv2022perfect} remain open in this setting.

\appendix

\subsection{Proof of \cref{thm:privacy}}
We first prove the clients' data privacy from any other set $\colss$ of at most $\ncolluderss$ colluding clients. Note that each client $\clientidxb \in \colss$ receives the following set of messages from all other clients: $\allreceived[\clientidxb] \define \{\ssclient[\evalc[\clientidxb]]\}_{\client \in \incidentedge, \objective \in \incidentvertex[\clientidxb] \setminus \{\clientidxb\}, \partition \in [\npartitions]}$, i.e., for each objective $\objective \in \incidentvertex[\clientidxb]$ one share from each client that was assigned the same objective $\objective$, per partition $\partition \in [\npartitions]$. Hence, for each objective $\objective \in [\nobjectives]$ and each partition $\partition \in [\npartitions]$ and set of clients in $\colss$ receive at most $\ncolluderss$ shares. Since, by design, at most $\ncolluderss$ secret shares of any two objectives $\objective, \objective^\prime \in [\nobjectives]$ and two partitions $\partition, \partition^\prime$ are pair-wise independent of each other, it suffices to prove that for all clients $\client \in [\nclients]$ any for any pair of $\objective, \partition$, we have $\condmutinf{\{\logitpartrv\}_{\partidx \in [\kstorage-\ncolluderss]}}{\allreceivedcol[\colss]}{\{\logitrv\}_{\objective \in \incidentvertex, \client \in \colss, \labelidx\in [\nlabels]}} = 0$, where $\logitpartrv$ is the random variable corresponding to $\logitpart$ as defined in \cref{sec:scheme}. The set of critical messages is then given by $\{\ssclient[\evalc[\clientidxb]]\}_{\clientidxb \in \colss}$. Hence, we need to prove that
\begin{align*}
    &\condmutinf{\{\logitpartrv\}_{\partidx \in [\kstorage-\ncolluderss]}}{\{\ssclient[\evalc[\clientidxb]]\}_{\clientidxb \in \colss}}{\{\logitrv\}_{\objective \in \incidentvertex, \client \in \colss, \labelidx\in [\nlabels]}} \\
    &\condentropy{\{\logitpartrv\}_{\partidx \in [\kstorage-\ncolluderss]}}{\{\logitrv\}_{\objective \in \incidentvertex, \client \in \colss, \labelidx\in [\nlabels]}} \\
    &\!-\! \condentropy{\!\{\logitpartrv\}_{\partidx \in [\kstorage-\ncolluderss]}\!\!}{\!\!\{\logitrv\}_{\objective \in \incidentvertex, \client \in \colss, \labelidx\in [\nlabels]}, \! \{\ssclient[\evalc[\clientidxb]]\}_{\clientidxb \in \colss}\!} \\
    &\condentropy{\{\logitpartrv\}_{\partidx \in [\kstorage-\ncolluderss]}}{\{\logitrv\}_{\objective \in \incidentvertex, \client \in \colss, \labelidx\in [\nlabels]}} \\
    &-\condentropy{\{\logitpartrv\}_{\partidx \in [\kstorage-\ncolluderss]}}{\{\logitrv\}_{\objective \in \incidentvertex, \client \in \colss, \labelidx\in [\nlabels]}} = 0,
\end{align*}
which holds since any $\ncolluderss$ shares do not reveal anything about the privacy labels $\{\logitpartrv\}_{\partidx \in [\kstorage-\ncolluderss]}$ beyond the correlation with the colluders' information.

We now prove the privacy of the objective $\queriedobjective$ of interest in the following, in particular, the queries or shares observed and held by any set of $\ncolludersq$ clients does not leak any information about $\queriedobjective$. When clear from the context, we omit the subscripts from the set notation for readability.
\begin{align*}
    &\mutinf{\{\queryrvcol\}, \camready{\allreceivedcol[\colquery]}, \allcol}{\queriedobjective} \\
    &\overset{(a)}{=} \mutinf{\{\queryrvcol\}, \allreceivedcol[\colquery], \{\logitrv\}_{\client \in \colquery}}{\queriedobjective} \\
    &\overset{(b)}{=} \mutinf{\{\queryrvcol\}}{\queriedobjective} + \condmutinf{\allreceivedcol[\colquery], \{\logitrv\}_{\client \in \colquery}}{\queriedobjective}{\{\queryrvcol\}} \\
    &= \mutinf{\{\queryrvcol\}}{\queriedobjective} \begin{aligned}[t] &+ \condentropy{\allreceivedcol[\colquery], \{\logitrv\}_{\client \in \colquery}}{\{\queryrvcol\}}\\ &- \condentropy{\allreceivedcol[\colquery], \{\logitrv\}_{\client \in \colquery}}{\{\queryrvcol\}, \queriedobjective} \end{aligned} \\
    &\overset{(c)}{=} \mutinf{\!\{\queryrvcol\}}{\!\queriedobjective} \!+\!\entropy{\!\allreceivedcol[\colquery], \!\! \{\logitrv\}_{\client \in \colquery}} \!-\! \entropy{\!\allreceivedcol[\colquery], \!\!\{\logitrv\}_{\client \in \colquery}\!}
\end{align*}
where $(a)$ is by the definition of $\allcol$, $(b)$ follows from the chain rule of mutual information, $(c)$ from the definition of conditional mutual information and by independence of $\{\queryrvcol\}, \queriedobjective$ of $\allreceivedcol[\colquery],\{\logitrv\}_{\client \in \colquery}$.

Assuming w.l.o.g. that $\queriedobjective = 1$, then
\begin{align*}
    &\mutinf{\{\queryrvcol\}, \allcol}{\queriedobjective} = \sumo \condmutinf{\{\queryrvcol\}}{\queriedobjective}{\{\queryrvcoltmp\}_{\objectivetmp < \objective}} \\
    &\overset{(d)}{=} \sumo \mutinf{\{\queryrvcol\}}{\queriedobjective} \overset{(e)}{\leq} \sumo \mutinf{\{\queryrvcol\}}{\querymessage} = 0,
\end{align*}
where $(d)$ is because any set of at most $\ncolludersq$ shares $\queryrvcol$ encoding the query for objective $\objective$ is independent of another set of secret shares $\queryrvcol$ encoding the query for objective $\objectivetmp$ by the properties of secret sharing. $(e)$ holds by the data processing inequality, and the last equality holds since \eqref{eq:query_poly} is a secret sharing according to McEliece-Sarwate \cite{mceliece1981sharing} where any set of at most $\ncolludersq$ shares are statistically independent of the private message $\querymessage$. This concludes the proof.

\subsection{Proof of \cref{thm:correctness}}
\begin{proof}

\allowdisplaybreaks
The federator receives the answers $\answer = \{\answerpart[1], \cdots, \answerpart[\npartitions]\}$, where $\answerpart$ are evaluations of the following answer polynomial
\begin{align*}
    &\ssanswerpart = \sumi \dualco \sumssclient[x] \ssquery[x] \\
    &= \begin{aligned}[t] \sumi \dualco &\left(\sum_{\partidx=1}^{\kstorage-\ncolluderss} \!\!\! x^{\partidx\!-\!1} \! \sum_{\client \in \incidentedge} \logitpart[\partidx] + \sum_{\tau=1}^{\ncolluderss} x^{\kstorage-\ncolluderss+\tau-1} \rlogitssum[\tau] \right) \\
    &\left(\querymessage[\partidx] + \sum_{\tau=1}^{\ncolludersq} x^{\kstorage-\ncolluderss+\tau-1} \rquery[\tau]\right) \end{aligned} \\
    &= \sumi \dualco \left(\sum_{\partidx=1}^{\kstorage-\ncolluderss} \!\!\! x^{\partidx\!-\!1} \! \sum_{\client \in \incidentedge} \logitpart[\partidx]\right) \querymessage[\partidx] \\
    &+ \sumi \dualco \left(\sum_{\partidx=1}^{\kstorage-\ncolluderss} \!\!\! x^{\partidx\!-\!1} \! \sum_{\client \in \incidentedge} \logitpart[\partidx]\right) \!\! \left(\sum_{\tau=1}^{\ncolludersq} x^{\kstorage-\ncolluderss+\tau-1} \rquery[\tau]\right) \\
    &+ \sumi \dualco \left(\sum_{\tau=1}^{\ncolluderss} x^{\kstorage-\ncolluderss+\tau-1} \rlogitssum[\tau]\right) \querymessage[\partidx] \\
    &+ \sumi \dualco \left(\sum_{\tau^\prime=1}^{\ncolluderss} x^{\kstorage-\ncolluderss+\tau^\prime-1} \rlogitssum[\tau]\right) \!\! \left(\sum_{\tau=1}^{\ncolludersq} x^{\kstorage-\ncolluderss+\tau-1} \rquery[\tau]\right) \\
    &= \dualcoqueried \sum_{\partidx=1}^{\kstorage-\ncolluderss} x^{\partidx-1} \sum_{\client\in \incidentedge[\queriedobjective]} \logitpartqueried[\partidx] + \dualcoqueried \sum_{\tau=1}^{\ncolluderss} x^{\kstorage-\ncolluderss+\tau-1} \rlogitssumqueried[\tau] \\
    &+ \sum_{\partidx=1}^{\kstorage-\ncolluderss} \sum_{\tau=1}^{\ncolludersq} x^{\kstorage-\ncolluderss+\partidx+\tau-2} \sumi \dualco \left(\rquery[\tau] \sum_{\client \in \incidentedge} \logitpart[\partidx]\right) \\
    &+ \sum_{\tau^\prime=1}^{\ncolluderss} \sum_{\tau=1}^{\ncolludersq} x^{2\kstorage-2\ncolluderss+\tau^\prime+\tau-2} \sumi \dualco \rlogitssum[\tau^\prime] \rquery[\tau], %
    \end{align*}

For $\evalidx \in [\kstorage-\ncolluderss]$, summing over all clients, we have
\begin{align*}
    &\sumc \evalc^{-\evalidx} \answerpart = \sumc \sumi \evalc^{-\evalidx} \dualco \sumssclient[\evalc] \ssquery[\evalc] \\
    &= \sumo \sumio \evalc^{-\evalidx} \dualco \sumssclient[\evalc] \ssquery[\evalc] \\
    &= \sum_{\partidx=1}^{\kstorage-\ncolluderss} \sumioqueried\left(\dualcoqueried \evalc^{\partidx-\evalidx-1} \sum_{\client\in \incidentedge[\queriedobjective]} \logitpartqueried[\partidx]\right) \\
    &+ \sum_{\tau=1}^{\ncolluderss} \rlogitssumqueried[\tau] \sumioqueried \dualcoqueried \evalc^{\kstorage-\ncolluderss+\tau-\evalidx-1} \\
    &+ \sumo \! \sum_{\partidx=1}^{\kstorage-\ncolluderss} \! \sum_{\tau=1}^{\ncolludersq} \left(\rquery[\tau] \!\!\! \sum_{\client \in \incidentedge} \!\!\! \logitpart[\partidx]\right) \!\! \sumio \!\!\! \dualco \evalc^{\kstorage-\ncolluderss+\partidx+\tau-\evalidx-2} \\
    &+ \sumo \sum_{\tau^\prime=1}^{\ncolluderss} \sum_{\tau=1}^{\ncolludersq} \rlogitssum[\tau^\prime] \rquery[\tau] \sumio \dualco \evalc^{2\kstorage-2\ncolluderss+\tau^\prime+\tau-\evalidx-2} \\
    &= \sum_{\partidx=1}^{\kstorage-\ncolluderss} \left(\sum_{\client\in \incidentedge[\queriedobjective]} \logitpartqueried[\partidx]\right) \sumioqueried \dualcoqueried \evalc^{\partidx-\evalidx-1} \\
\end{align*}
where the latter step holds because $\sumio \dualco \evalc^\zeta = 0$ for all $0 \leq \zeta \leq \rep-2$. It is ensured that $\kstorage-\ncolluderss \leq \frac{\rep-\ncolluderss-\ncolludersq+1}{2}$. Consequently, we have 1) that $0 \leq \tau-1 \leq \kstorage-\ncolluderss+\tau-\evalidx-1 \leq \kstorage-\ncolluderss+\tau-2 \leq \kstorage-\ncolluderss-3$, 2) that $2 \leq \partidx+\tau \leq \kstorage-\ncolluderss+\partidx+\tau-\evalidx-2 \leq \kstorage-\ncolluderss+\partidx+\tau-3 \leq 2(\kstorage-\ncolluderss)+\ncolludersq-3 \leq \rep - \ncolluderss -2 \leq \rep-2$, and 3) that $\kstorage-\ncolluderss \leq \kstorage-\ncolluderss+\tau^\prime+\tau-2 \leq 2\kstorage-2\ncolluderss+\tau^\prime+\tau-\evalidx-2 \leq 2\kstorage-2\ncolluderss+\tau^\prime+\tau-3 \leq 2\kstorage-2\ncolluderss+\ncolluderss+\ncolludersq-3 \leq \rho-2$. Hence, in all cases, $\zeta$ is between $0$ and $\rep-2$, which is why all terms except the first vanish.

Similarly, all terms corresponding to $\evalc^{\partidx-\evalidx-1}$ vanish when $\evalidx < \partidx$, and hence we obtain from $\evalc \in [\kstorage-\ncolluderss]$ the following linear system of equations. For notational convenience, we let $\logitpartqueriedsum \define \sum_{\client\in \incidentedge[\queriedobjective]} \logitpartqueried[\partidx]$. Further, let $\evalsum \define \sumc \evalc^{-\evalidx} \answerpart$. Without loss of generality, let clients $\client \in [\rep]$ store the logits for objective $\queriedobjective$. We have for
\begin{align*}
\mathbf{P} &= \sumioqueried \begin{pmatrix}
    \dualcoqueried \evalc^{-1} & 0 & \cdots & 0 & 0 \\
    \dualcoqueried \evalc^{-2} & \dualcoqueried \evalc^{-1} & 0 & \cdots & 0 \\
    \vdots & \ddots & \ddots & \ddots & 0 \\
    \dualcoqueried \evalc^{-\kstorage+\ncolluderss} & \cdots & \cdots & \cdots & \dualcoqueried \evalc^{-1}
\end{pmatrix} & \\
\end{align*}
that~\begin{align*}
    \begin{pmatrix}
        \evalsum[1] \\
        \evalsum[2] \\
        \cdots \\
        \evalsum[\kstorage-\ncolluderss]
    \end{pmatrix} = 
    \mathbf{P}
    \begin{pmatrix}
        \logitpartqueriedsum[1] \\
        \logitpartqueriedsum[2] \\
        \cdots \\
        \logitpartqueriedsum[\kstorage-\ncolluderss]
    \end{pmatrix}
\end{align*}
from which the desired values $\logitpartqueriedsum[1], \logitpartqueriedsum[2], \cdots, \logitpartqueriedsum[\kstorage-\ncolluderss]$ can be obtained. Using the triangular shape of the matrix and applying \cref{lemma:nonzero_proof} proves that $\mathbf{P}$ is invertible when $q > \rep + \kstorage-\ncolluderss -1$, and $\evalc, \client \in [\nclients],$ are chosen as $\evalc = \eval^\client, \client \in [\nclients]$ for a generator $\eval$ of the field $\F_q$. We state and prove the lemma in the following. This also concludes the proof of the theorem.
\begin{lemma} \label{lemma:nonzero_proof}
    For $q > \rep + \kstorage-\ncolluderss -1$, and $\evalc = \eval^\client, \client \in [\nclients]$, for any generator element $\eval$ of the field $\F_q$, we have $\sumioqueried \dualcoqueried \evalc^{-\evalidx} \neq 0$ for all $\evalidx \in [\kstorage-\ncolluderss]$.
\end{lemma}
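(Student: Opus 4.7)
The plan is to evaluate $S_\evalidx \define \sumioqueried \dualcoqueried \evalc^{-\evalidx}$ in closed form via a partial-fraction / residue argument and then establish nonvanishing in $\F_q$. Set $P(x) \define \prod_{\client \in \incidentedge[\queriedobjective]} (x - \evalc)$, a monic polynomial of degree $\rep$. By definition, $\dualcoqueried = 1/P'(\evalc)$, so $S_\evalidx$ is precisely the coefficient of $x^{\rep-1}$ in the unique Lagrange interpolant of $x^{-\evalidx}$ at the points $\{\evalc\}_{\client \in \incidentedge[\queriedobjective]}$.

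Summing all residues of $1/(x^\evalidx P(x))$ -- the finite poles at $\evalc$ contribute $\evalc^{-\evalidx}/P'(\evalc)$, the pole at $x=0$ contributes $[x^{\evalidx-1}](1/P(x))$, and the residue at infinity vanishes since the degree gap is $\rep + \evalidx \geq 2$ -- yields
\[
S_\evalidx = -[x^{\evalidx-1}]\frac{1}{P(x)} = \frac{(-1)^{\rep-1}}{\prod_{\client \in \incidentedge[\queriedobjective]} \evalc} \, h_{\evalidx - 1}\bigl(\eval^{-\client_1}, \ldots, \eval^{-\client_\rep}\bigr),
\]
where $\incidentedge[\queriedobjective] = \{\client_1, \ldots, \client_\rep\}$ is any enumeration and the second equality uses $P(x) = (-1)^\rep \bigl(\prod_\client \evalc\bigr) \prod_\client (1 - x/\evalc)$ together with the generating-function identity $\prod_\client(1-x/\evalc)^{-1} = \sum_m h_m(1/\evalc[\client_1],\dots,1/\evalc[\client_\rep]) \, x^m$ for the complete homogeneous symmetric polynomials. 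The prefactor is nonzero because every $\evalc = \eval^\client$ is nonzero, so the claim reduces to showing $h_{\evalidx - 1}(\gamma^{\client_1}, \ldots, \gamma^{\client_\rep}) \neq 0$ in $\F_q$, where $\gamma \define \eval^{-1}$ is still a primitive element of $\F_q^*$.

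For $\evalidx = 1$ we have $h_0 \equiv 1$, which already settles the claim and, crucially, secures nonvanishing of the diagonal of the triangular matrix $\mathbf{P}$ appearing in the proof of \cref{thm:correctness}. For $\evalidx \geq 2$, invoking the WLOG relabeling $\incidentedge[\queriedobjective] = [\rep]$ from the correctness proof, I would apply the principal specialization $h_m(\gamma, \gamma^2, \ldots, \gamma^\rep) = \gamma^m \binom{m + \rep - 1}{m}_\gamma$ to reduce the problem to nonvanishing of a Gaussian binomial. The product formula $\binom{\rep + \evalidx - 2}{\evalidx - 1}_\gamma = \prod_{i=1}^{\evalidx - 1}(1 - \gamma^{\rep - 1 + i})/(1 - \gamma^i)$ together with the hypothesis $q > \rep + \kstorage - \ncolluderss - 1$ and $\evalidx \leq \kstorage - \ncolluderss$ gives $1 \leq \rep - 1 + i \leq \rep + \kstorage - \ncolluderss - 2 < q - 1$ for every $i \in [\evalidx - 1]$; since $\gamma$ has order exactly $q - 1$, none of the numerator or denominator exponents is a multiple of $q-1$ and no factor vanishes. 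The main obstacle is precisely the WLOG step for $\evalidx \geq 2$: if the clients in $\incidentedge[\queriedobjective]$ carry non-consecutive indices, the principal specialization identity no longer applies and one must instead argue directly via the Jacobi-Trudi bialternant, showing that the associated generalized Vandermonde determinant is nonzero modulo the characteristic of $\F_q$ -- a more delicate exponent-counting argument that depends on the specific spacing of $\{\client_1,\ldots,\client_\rep\}$.
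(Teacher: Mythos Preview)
Your approach differs from the paper's. The paper assumes $\mathcal{I}(e_j)=[\rho]$ without loss of generality, reads $(\nu_{j,i})_{i\in[\rho]}$ as a codeword of a one-dimensional GRS code $\mathcal{C}$, and shows that $(\alpha_i^{-\vartheta})_i=(\alpha_i^{q-\vartheta-1})_i$ does not lie in $\mathcal{C}^\perp$ by checking that the $\rho\times\rho$ matrix with entries $\alpha^{r\cdot i}$, $r\in\{0,\ldots,\rho-2,\,q-\vartheta-1\}$ and $i\in[\rho]$, has full rank. Under the WLOG the column exponents $1,\ldots,\rho$ are consecutive, so this matrix is a transposed Vandermonde in the row-nodes $\alpha^0,\ldots,\alpha^{\rho-2},\alpha^{q-\vartheta-1}$, and the field-size hypothesis is exactly what makes those nodes distinct.

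Your partial-fraction route $S_\vartheta=-[x^{\vartheta-1}](1/P(x))$ buys strictly more: the case $\vartheta=1$ follows for \emph{any} index set $\mathcal{I}(e_j)$ from $h_0\equiv 1$, and as you correctly observe this is all the lower-triangular matrix $\mathbf{P}$ needs for invertibility. For $\vartheta\geq 2$ your principal-specialization and Gaussian-binomial computation lands on the identical threshold $q>\rho+k_{\mathcal{C}}-z_s-1$ as the paper, under the identical consecutive-index assumption. Your caution about that assumption is well placed: for a general $\rho$-subset of $[n]$ the paper's matrix becomes a generalized Vandermonde with non-consecutive column exponents (equivalently, your $h_{\vartheta-1}$ is no longer a principal specialization), and such determinants can vanish over $\mathbb{F}_q$. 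So on the $\vartheta\geq 2$ cases the two proofs are on equal footing; neither extends past the WLOG, and the paper does not flag this while you do.

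One minor remark: the residue identity you invoke is valid over $\mathbb{F}_q$ purely algebraically---expand $1/P(x)=\sum_i 1/\bigl(P'(\alpha_i)(x-\alpha_i)\bigr)$ and then $1/(x-\alpha_i)=-\sum_{m\geq 0}x^m/\alpha_i^{m+1}$ to read off the coefficient---so no appeal to complex analysis is needed.
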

\begin{proof}
    Assuming without loss of generality that $\incidentedge[\queriedobjective] = [\rep]$, we write $\sumioqueried \dualcoqueried \evalc^{-\evalidx}$ as the inner product of two codewords:
    \begin{align*}
        \sumioqueried \dualcoqueried \evalc^{-\evalidx} = (\dualcoqueried[1], \cdots, \dualcoqueried[\rep]) (\evalc[1]^{-\evalidx}, \cdots, \evalc[\rep]^{-\evalidx})^T
    \end{align*}
    Let $(\dualcoqueried[1], \cdots, \dualcoqueried[\rep]) \in \mathcal{C}$, where $\mathcal{C}$ is a generalized Reed-Solomon code with dimension $1$ and length $\rep$. If and only if $(\evalc[1]^{-\evalidx}, \cdots, \evalc[\rep]^{-\evalidx}) \in \mathcal{C}^\perp$, it holds that $(\dualcoqueried[1], \cdots, \dualcoqueried[\rep])(\evalc[1]^{-\evalidx}, \cdots, \evalc[\rep]^{-\evalidx})^T = 0$. We need to show that $(\evalc[1]^{-\evalidx}, \cdots, \evalc[\rep]^{-\evalidx}) \notin \mathcal{C}^\perp$. By the definition of generalized Reed-Solomon codes and their dual, the generator matrix $\mathbf{G}_{\mathcal{C}^\perp}$ of $\mathcal{C}^\perp$ is given by
    \begin{align}
        \mathbf{G}_{\mathcal{C}^\perp} = \begin{pmatrix}
            1 & 1 & \cdots & 1 \\
            \evalc[1] & \evalc[2] & \cdots & \evalc[\rep] \\
            \vdots & \vdots & \ddots & \vdots \\
            \evalc[1]^{\rep-2} & \evalc[2]^{\rep-2} & \cdots & \evalc[\rep]^{\rep-2}
        \end{pmatrix} \label{eq:Gdual}
    \end{align}
    Hence, $(\dualcoqueried[1], \cdots, \dualcoqueried[\rep])(\evalc[1]^{-\evalidx}, \cdots, \evalc[\rep]^{-\evalidx})^T = 0$ for all $2-\rep \leq \evalidx \leq 0$ (which we used above).
    Since we have $\evalc^{q-1}=1$, we can write 
    \begin{align*}
        \sumioqueried \dualcoqueried \evalc^{-\evalidx} &= (\dualcoqueried[1], \cdots, \dualcoqueried[\rep]) (\evalc[1]^{-\evalidx}, \cdots, \evalc[\rep]^{-\evalidx})^T \\
        &= (\dualcoqueried[1], \cdots, \dualcoqueried[\rep]) (\evalc[1]^{q-\evalidx-1}, \cdots, \evalc[\rep]^{q-\evalidx-1})^T
    \end{align*}
    If $q-\evalidx-1 \leq \rep-2$, then from $\eqref{eq:Gdual}$, it can be seen that the above inner product is $0$. If $q-\evalidx-1 > \rep-2$, we must show that the matrix
    \begin{align*}
        \begin{pmatrix}
            1 & 1 & \cdots & 1 \\
            \evalc[1] & \evalc[2] & \cdots & \evalc[\rep] \\
            \vdots & \vdots & \ddots & \vdots \\
            \evalc[1]^{\rep-2} & \evalc[2]^{\rep-2} & \cdots & \evalc[\rep]^{\rep-2} \\
            \evalc[1]^{q-\evalidx-1} & \evalc[2]^{q-\evalidx-1} & \cdots & \evalc[\rep]^{q-\evalidx-1}
        \end{pmatrix}
    \end{align*}
     is full rank for every $\evalidx \in [1, \kstorage-\ncolluderss]$. Note that only the first $\rep-1$ rows correspond to the structure of a transposed Vandermonde matrix, hence the latter row is not necessarily linearly independent from the first $\rep-1$ rows. By choosing each evaluation point $\evalc$ to be the $\client$-th power of a primitive element $\eval$, we can rewrite the matrix as
     \begin{align*}
        \begin{pmatrix}
            1 & 1 & \cdots & 1 \\
            \eval & \eval^2 & \cdots & \eval^{\rep} \\
            \vdots & \vdots & \ddots & \vdots \\
            \eval^{\rep-2} & \eval^{2(\rep-2)} & \cdots & \eval^{\rep(\rep-2)} \\
            \eval^{q-\evalidx-1} & \eval^{2(q-\evalidx-1)} & \cdots & \eval^{\rep(q-\evalidx-1)}
        \end{pmatrix},
    \end{align*}
    which exhibits the structure of a Vandermonde matrix. To satisfy that all powers $0, \cdots, q-\evalidx-1$ of $\eval$ are distinct, we require that $\texttt{ord}(\eval) > q-2 \geq q-\evalidx-1$. Hence, the order of $\alpha$ must be $\texttt{ord}(\eval) \geq q-1$, which is satisfied if $\eval$ is a generator of the field $\F_q$. 
    In this case, the above matrix is Vandermonde, thus full rank with all rows and columns being linearly independent. Hence, $(\eval^{q-\evalidx-1}, \eval^{2(q-\evalidx-1)}, \cdots, \eval^{\rep(q-\evalidx-1)})$ is linearly independent of the rows in $\mathbf{G}_{\mathcal{C}^\perp}$, and hence $(\evalc[1]^{q-\evalidx-1}, \evalc[2]^{q-\evalidx-1}, \cdots, \evalc[\rep]^{q-\evalidx-1}) \notin \mathcal{C}^\perp$. For $\evalidx \in [1, \kstorage-\ncolluderss]$, we have $q-\evalidx-1 \geq q-\kstorage+\ncolluderss-1$. Hence, the statement holds for $q-\kstorage+\ncolluderss-1 > \rep -2$, and thus for $q > \rep + \kstorage-\ncolluderss -1$. %
\end{proof}
This concludes the proof of \cref{thm:correctness}.
\end{proof}

\subsection{Proof of \cref{lemma:optimal_comm}}
\begin{proof}
Considering the rate of the secret sharing stage and the PIR scheme following the construction in \cite{freij2017private}, the communication cost in $\F_{\nclasses+1}$ is given by
\begin{equation*}
    \frac{\nobjectives \nlabels \nclasses \cdot \nclients (\nclients-1)}{\kstorage-\ncolluderss} + \frac{\nlabels \nclasses}{\kstorage-\ncolluderss} \frac{\kstorage \nclients}{\nclients - \kstorage - \ncolludersq +1},
\end{equation*}
which is a convex optimization over the convex set $\ncolludersq+1 \leq \kstorage \leq \nclients - \ncolludersq$ since being a sum of convex functions over that set. For $c_1 = (\nclients - \ncolludersq + 1)$ and $c_2 = \nobjectives \nclients (\nclients-1)$, we have
\begin{align*}
    &\frac{\partial}{\partial \kstorage} \frac{\nobjectives \nlabels \nclasses \cdot \nclients (\nclients-1)}{\kstorage-\ncolluderss} + \frac{\nlabels \nclasses}{\kstorage-\ncolluderss} \frac{\kstorage \nclients}{\nclients - \kstorage - \ncolludersq +1} \\
    &= \nlabels \nclasses \left( -\frac{\nobjectives \nclients (\nclients-1)}{(\kstorage - \ncolluderss)^2} + \frac{\nclients (\kstorage^2 - \ncolluderss(\nclients-\ncolludersq+1))}{(\kstorage-\ncolluderss)^2 (\nclients-\kstorage-\ncolludersq+1)^2} \right) \\
    &= \nlabels \nclasses \left( -\frac{c_2 \left(c_1^2 + \kstorage^2 - 2 c_1 \kstorage \right) - \nclients (\kstorage^2 - c_1 \ncolluderss)}{(\kstorage-\ncolluderss)^2 (\nclients-\kstorage-\ncolludersq+1)^2} \right) \\
    &= \nlabels \nclasses \nclients \left( -\frac{ \kstorage^2 (c_2 - \nclients) - 2\kstorage c_1 c_2 + c_1^2 c_2 + \nclients c_1 \ncolluderss}{(\kstorage-\ncolluderss)^2 (\nclients-\kstorage-\ncolludersq+1)^2} \right).
\end{align*}
Setting the derivative to zero, we obtain
\begin{align*}
    \kstorage^\prime &\!=\! \frac{1}{2(c_2 - \nclients)} \!\! \left(\! 2 c_1 c_2 \! -\! \sqrt{4c_1^2 c_2^2 - 4 (c_2 - \nclients) (c_1^2 c_2 \! + \! \nclients c_1 \ncolluderss)}\right)\!,
\end{align*}
and hence the statement above.
\end{proof}

\balance
\bibliographystyle{IEEEtran}
\bibliography{sample,refs}

\vfill

\end{document}